\newtheorem{theorem}{Theorem}
\newtheorem{lemma}[theorem]{Lemma}
\newtheorem{corollary}[theorem]{Corollary}
\newcommand{\pmax}{p_{\max}}
\newcommand{\problem}{$1\,|\,|\,\sum p_j U_j$\xspace}
\newcommand{\problemweighted}{$1\,|\,|\,\sum w_j U_j$\xspace}
\newcommand{\multiplemachines}{$Pm\,|\,|\,\sum p_j U_j$\xspace}
\DeclareMathOperator*{\argmin}{arg\,min}
\let\le\leqslant
\let\leq\leqslant
\let\ge\geqslant
\let\geq\geqslant
\title{On Minimizing Tardy Processing Time, Max-Min Skewed Convolution, and Triangular Structured ILPs\thanks{Part of this research was done during the Discrete Optimization trimester program at Hausdorff Research Institute for Mathematics (HIM) in Bonn, Germany.}}
\author{Kim-Manuel Klein\thanks{Supported by the DFG project KL 3408/1-1.} \\ \normalsize{Bremen University} \and Adam Polak\thanks{Supported by the Swiss National Science Foundation project \emph{Lattice Algorithms and Integer Programming} (185030).} \\ \normalsize{EPFL} \and Lars Rohwedder \\ \normalsize{Maastricht University}}
\date{}
\begin{document}

\maketitle

\begin{abstract}
The starting point of this paper is the problem of scheduling $n$ jobs with processing times and due dates on a single machine so as to minimize the total processing time of tardy jobs, i.e., \problem. This problem was identified by Bringmann et al.~(Algorithmica 2022) as a natural subquadratic-time special case of the classic \problemweighted problem, which
likely requires time quadratic in the total processing time $P$, because of a fine-grained lower bound. Bringmann et al.~obtain their $\widetilde{O}(P^{7/4})$ time scheduling algorithm through a new variant of convolution, dubbed Max-Min Skewed Convolution, which they solve in $\widetilde{O}(n^{7/4})$ time. Our main technical contribution is a faster and simpler convolution algorithm running in $\widetilde{O}(n^{5/3})$ time. It implies an $\widetilde{O}(P^{5/3})$ time algorithm for \problem, but may also be of independent interest.

Inspired by recent developments for the Subset Sum and Knapsack problems, we study \problem parameterized by the maximum job processing time $p_{\max}$. With proximity techniques borrowed from integer linear programming (ILP), we show structural properties of the problem that, coupled with a new dynamic programming formulation, lead to an $\widetilde{O}(n+p_{\max}^3)$ time algorithm. Moreover, in the setting with multiple machines, we use similar techniques to get an $n \cdot p_{\max}^{O(m)}$ time algorithm for \multiplemachines.

Finally, we point out that the considered problems exhibit a particular triangular block structure in the constraint matrices of their ILP formulations.
In light of recent ILP research, a question that arises is
whether one can devise a generic algorithm for such a class of ILPs.
We give a negative answer to this question: we show that already a slight generalization of the structure of the scheduling ILP leads to a strongly NP-hard problem.
\end{abstract}
\pagebreak
\section{Introduction}
We consider the scheduling problem \problem of minimizing the total sum of processing times of \emph{tardy} jobs, where we are given a set of $n$ jobs, numbered from $1$ to $n$, and each job $j$ has a processing time $p_j$ and a due date $d_j$. A schedule is defined by a permutation $\sigma: \{1, \ldots ,n \} \to \{1, \ldots ,n \}$ of the jobs, and, based on this schedule $\sigma$, the \emph{completion time} of jobs is defined. The completion time $C_j$ of a job $j$ is $C_j = \sum_{i \, : \, \sigma(i) \leq \sigma(j)} p_i$. The objective of the problem is to find a schedule $\sigma$ that minimizes the sum of processing times of jobs that miss their due date $d_j$ (called \emph{tardy}), i.e.,
\[
    \min_{\sigma} \ \sum_{\mathclap{j \, : \, C_j > d_j}} p_j.
\]
Note that for tardy jobs we pay their penalty regardless of the actual completion time. Therefore, in the remainder of the paper, we will use the equivalent problem formulation that we have to select a subset of jobs $S\subseteq \{1,\ldots,n\}$ such that all selected job can be completed by their due dates.
In this case it can be assumed that the jobs from $S$ are scheduled by the earliest-due-date-first order~\cite{lawler1969functional_deadline}.
One could also think of the problem as a scenario
where the jobs that cannot be scheduled before their due dates on the available machine have to be outsourced somewhere else, and the cost of doing this is proportional to the total size of these outsourced jobs. These two properties are typical for hybrid cloud computing platforms.

For the case where all due dates are identical, the scheduling problem is equivalent to the classic Subset Sum problem. In the latter problem we are given a (multi-)set of numbers $\{a_1, \ldots , a_n\}$ and a target value $t$, and the objective is to find a subset of the numbers that sums up exactly to $t$, which is equivalent to the problem of finding the maximum subset sum that is at most $t$. With arbitrary due dates, \problem behaves like a multi-level generalization of the Subset Sum problem, where upper bounds are imposed not only on the whole sum, but also on prefix sums.

In recent years there has been considerable attention on developing fast pseudopolynomial time algorithms solving the Subset Sum problem. Most prominent is the algorithm by Bringmann~\cite{bringmann2017subset_sum_near_linear} solving the problem in time $\widetilde{O}(t)$.\footnote{The $\widetilde O$ notation hides polylogarithmic factors.}
The algorithm relies, among other techniques, on the use of Boolean convolution, which can be computed very efficiently in time $O(n \log n)$ by Fast Fourier Transform.
Pseudopolynomial time algorithms have also been studied for a parameter $a_{\max} = \max_i a_i$, which is stronger, i.e., $a_{\max} \le t$.
Eisenbrand and Weismantel~\cite{eisenbrand2018proximity} developed an algorithm for integer linear programming (ILP) that, when applied to Subset Sum, gives a running time of $O(n + a_{\max}^3)$, the first algorithm with a running time of the form $O(n + \mathrm{poly}(a_{\max}))$. Based on the Steinitz Lemma they developed proximity results, which can  be used to reduce the size of $t$ and therefore solve the problem within a running time independent of the size of the target value $t$. The currently fastest algorithm in this regard is by Polak, Rohwedder and Węgrzycki~\cite{polak_subsetsum_an} with running time $\widetilde O(n+ a_{\max}^{5/3})$.

Given the recent attention on pseudopolynomial time algorithms for Subset Sum, it is not surprising that also~\problem has been considered in this direction.
Already in the late '60s, Lawler and Moore~\cite{lawler1969functional_deadline} considered this problem or, more precisely, the general form of \problemweighted, where the penalty for each job being tardy is not necessarily $p_j$, but can be an independent weight $w_j$.
They solved this problem in time $O(nP)$, where $P$ is the sum of processing times over all jobs. Their algorithm follows from a (by now) standard dynamic programming approach.
This general form of the problem is unlikely to admit better algorithms:
under a common hardness assumption the running time of this algorithm is essentially the best possible.
Indeed, assuming that $(\min,+)$-convolution cannot be solved in subquadratic time (a common hardness assumption in fine-grained complexity), there is no algorithm that solves \problemweighted in time $O(P^{2-\epsilon})$, for any $\epsilon>0$. This follows from the fact that \problemweighted generalizes Knapsack and the hardness already holds for Knapsack~\cite{cygan2019_minconv,KunnemannPS17_finegrainedDP}.
Since the hardness does not hold for Subset Sum (the case of Knapsack where profits equal weights), one could hope that it also does not hold either for the special case of \problemweighted, where penalties equal processing times, which is precisely~\problem. Indeed, Bringmann, Fischer, Hermelin, Shabtay, and Wellnitz~\cite{bringmann_deadline_scheduling}
recently obtained a subquadratic algorithm with running time $\widetilde O(P^{7/4})$ for the problem.
Along with this main result, they also consider other parameters: the sum of distinct due dates and the number of distinct due dates.
See also Hermelin et al.~\cite{hermelin_deadline_scheduling} for more related results.

\paragraph*{Max-min skewed convolution.}
Typically, a convolution has two input vectors $a$
and $b$ and outputs a vector $c$, where $c[k] = \oplus_{i + j = k} (a[i] \otimes a[j])$.
Different kinds of operators $\oplus$ and $\otimes$ have been studied, and, if the operators require only constant time, then a quadratic time algorithm is trivial. However, if, for example, ``$\oplus$'' $=$ ``$+$'' (standard addition) and ``$\otimes$'' $=$ ``$\cdot$'' (standard multiplication), then Fast Fourier Transform can solve convolution very efficiently in time $O(n\log n)$.
If, on the other hand, ``$\oplus$'' $=$ ``$\max$'' and ``$\otimes$'' $=$ ``$+$'', it is generally believed that no algorithm can solve the problem in time $O(n^{2 - \epsilon})$ for some fixed $\epsilon > 0$~\cite{cygan2019_minconv}.
Convolution problems have been studied extensively in fine-grained complexity, partly because they serve as good subroutines for other problems, see also~\cite{BringmannKW19, Lincoln0W20} for other recent examples.
The scheduling algorithm by Bringmann et al.~\cite{bringmann_deadline_scheduling} works by first reducing the problem to a new variant of convolution, called max-min skewed convolution,
and then solving this problem in subquadratic time.
Max-min skewed convolution is defined as the problem where we are given vectors $(a[0],a[1],\dotsc,a[n-1])$ and $(b[0],b[1],\dotsc,b[n-1])$ as well as a third vector $(d[0],d[1],\dotsc,d[2n-1])$ and our goal is to compute, for each $k = 0,1,\dotsc,2n-1$, the value
\begin{equation*}
    c[k] = \max_{i + j = k}\{\min\{a[i], b[j] + d[k]\} .
\end{equation*}
This extends the standard max-min convolution, where $d[k] = 0$ for all $k$. Max-min convolution can be solved non-trivially in time $\widetilde O(n^{3/2})$~\cite{maxminconv}. Bringmann et al.~develop an algorithm with running time $\widetilde O(n^{7/4})$ for max-min skewed convolution, when $d[k] = k$ for all $k$ (which is the relevant case for \problem). By their reduction this implies an $\widetilde O(P^{7/4})$ time algorithm for~\problem.

Our first result and our main technical contribution is a faster, simpler and more general algorithm for max-min skewed convolution.
\begin{theorem}
Max-min skewed convolution can be computed
in time $O(n^{5/3} \log n)$.
\end{theorem}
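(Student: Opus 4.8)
\emph{Proof idea.} The plan is to reduce the problem to Boolean convolutions and then run a two‑level coarse/refine scheme. For thresholds $v,w$ let $A_v=\{i:a[i]\ge v\}$ and $B_w=\{j:b[j]\ge w\}$, regarded as $0/1$ vectors, and let $S_{v,w}=A_v\oplus B_w$ be their Boolean convolution (sumset), so that $k\in S_{v,w}$ iff some $i+j=k$ has $a[i]\ge v$ and $b[j]\ge w$. Since $\min\{a[i],b[j]+d[k]\}\ge v$ is equivalent to $a[i]\ge v$ and $b[j]\ge v-d[k]$, we obtain the identity $c[k]=\max\{v: k\in S_{v,\,v-d[k]}\}$, the maximum being attained at one of the $\le 2n$ relevant candidate values. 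Observe that $d$ enters only as an additive shift depending on the fixed output index $k$, so the algorithm uses no structure of $d$ whatsoever; this is the source of the extra generality over Bringmann et al.

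Set $r=n^{1/3}$. Sort the values of $a$ and of $b$ in decreasing order, let $v_1>\dots>v_r$ be every $(n/r)$‑th value of $a$ (so $v_r=\min a$) and $w_1>\dots>w_r$ the analogous grid for $b$; these partition the $a$‑indices, respectively the $b$‑indices, into $r$ \emph{buckets} of $n/r$ indices each by rank. In the coarse phase I would, for every pair $(\ell,m)\in[r]^2$, compute $S_{\ell,m}:=A_{v_\ell}\oplus B_{w_m}$ by FFT and store it as a bit‑array; this costs $O(r^2\cdot n\log n)=O(n^{5/3}\log n)$. Because $A_{v_1}\subseteq A_{v_2}\subseteq\cdots$ and $B_{w_1}\subseteq B_{w_2}\subseteq\cdots$, the predicate $k\in S_{\ell,m}$ is monotone in $(\ell,m)$, so for each fixed $k$ the set of accepting cells is closed under increasing $\ell$ and $m$; tracing its frontier with a two‑pointer in $O(r)$ time yields $L[k]:=\max\{\min(v_\ell,\,w_m+d[k]): k\in S_{\ell,m}\}$, which satisfies $L[k]\le c[k]$.

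In the refine phase, fix $k$, let $(i^\ast,j^\ast)$ realize $c[k]=\min\{a[i^\ast],b[j^\ast]+d[k]\}$, and let $\ell^\ast,m^\ast$ be the buckets containing $a[i^\ast]$ and $b[j^\ast]$ by rank, so $v_{\ell^\ast}\le a[i^\ast]<v_{\ell^\ast-1}$ and $w_{m^\ast}\le b[j^\ast]<w_{m^\ast-1}$ (with $v_0=w_0=+\infty$). Then $k\in S_{\ell^\ast,m^\ast}$, hence $\min(v_{\ell^\ast},w_{m^\ast}+d[k])\le L[k]\le c[k]$, and a short case distinction on which of $v_{\ell^\ast}$, $w_{m^\ast}+d[k]$ is smaller shows that \emph{either} $L[k]\in[v_{\ell^\ast},v_{\ell^\ast-1})$, so $a[i^\ast]$ lies in the $a$‑bucket of $L[k]$, \emph{or} $L[k]-d[k]\in[w_{m^\ast},w_{m^\ast-1})$, so $b[j^\ast]$ lies in the $b$‑bucket of $L[k]-d[k]$. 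Consequently it suffices to let $\alpha_k$ be the $a$‑bucket of $L[k]$ and $\gamma_k$ the $b$‑bucket of $L[k]-d[k]$, iterate $i$ over the $n/r$ indices of bucket $\alpha_k$ and $j$ over the $n/r$ indices of bucket $\gamma_k$, evaluate $\min\{a[i],b[k-i]+d[k]\}$ and $\min\{a[k-j],b[j]+d[k]\}$ respectively (skipping out‑of‑range indices), and output the maximum of all these values together with $L[k]$; the dichotomy above guarantees the result equals $c[k]$. This costs $O(r+n/r)$ per output position, hence $O(n(r+n/r))=O(n^{5/3})$ overall, and boundary buckets (values above $v_1$, or $k-i$ out of range) are handled with an $-\infty$ sentinel and are routine.

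The one step needing genuine care is the refine‑phase dichotomy: showing that the \emph{single} number $L[k]$ (together with the $d[k]$ shift) already pins the optimal pair to $O(1)$ rank‑buckets, which amounts to getting the half‑open interval conventions and the sort's tie‑breaking exactly right so that no off‑by‑one slack sneaks in. Everything else is a routine combination of FFT‑based Boolean convolution with bucketing, and the running time comes out of balancing the coarse cost $r^2 n\log n$ against the refine cost $n^2/r$ at $r=n^{1/3}$.
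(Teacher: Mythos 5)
Your proposal is correct. The coarse phase is exactly the paper's: choose $r = n^{1/3}$ threshold levels for each of $a$ and $b$, compute all $r^2$ Boolean convolutions of the corresponding indicator vectors via FFT in $O(r^2 n\log n)$ time, and observe that (after sorting by value) these give a monotone $r\times r$ grid sample of a monotone $n\times n$ matrix for each $k$.

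Where you diverge is the per-$k$ recovery. The paper binary-searches over the $2n$ candidate output values, answering each ``is $c[k]>v$?'' query by reading one sampled grid cell and, if inconclusive, scanning the $O(n/r)$ candidate witnesses that could flip the cell inside its grid square; this costs $O((n/r)\log n)$ per $k$. You instead compute the grid-level lower bound $L[k]=\max\{\min(v_\ell,w_m+d[k]):k\in S_{\ell,m}\}$ with a two-pointer walk along the monotone staircase in $O(r)$ per $k$, and then prove a dichotomy lemma --- essentially a four-case check on whether the minimizer at $(\ell^\ast,m^\ast)$ is on the $a$-side or the $b$-side --- showing that the optimal witness must lie in the $a$-bucket of $L[k]$ or the $b$-bucket of $L[k]-d[k]$, which you then scan exhaustively in $O(n/r)$. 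I verified the dichotomy: since $\min(v_{\ell^\ast},w_{m^\ast}+d[k])\le L[k]\le c[k]$, in every combination of ``which side attains $c[k]$'' and ``which side attains the grid min'' one gets $L[k]\in[v_{\ell^\ast},v_{\ell^\ast-1})$ or $L[k]-d[k]\in[w_{m^\ast},w_{m^\ast-1})$. Your route replaces the paper's binary search with a constructive location step; this shaves the $\log n$ from the second phase (irrelevant asymptotically, since the FFT phase dominates), at the cost of the more delicate bucketing lemma you flag. The one thing you should make explicit, which the paper handles up front, is reducing to distinct entries in $a$ and $b$ (e.g.\ by scaling by $4n$ and adding distinct small offsets); otherwise the half-open bucket intervals you rely on can collapse when several $v_\ell$ coincide.
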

As a direct consequence, we obtain an improved algorithm for~\problem.
\begin{corollary}
The problem \problem can be solved in $\widetilde O(P^{5/3})$ time, where $P$ is the sum of processing times.
\end{corollary}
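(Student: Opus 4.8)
The plan is to derive the corollary by plugging Theorem~1 into the reduction from \problem to max-min skewed convolution established by Bringmann et al.~\cite{bringmann_deadline_scheduling}. First I would recall the shape of that reduction: after sorting the jobs by due date (and, if needed, rescaling so all processing times and due dates are integers bounded by $P$), their algorithm solves \problem by a divide-and-conquer scheme whose only non-trivial work consists of calls to a max-min skewed convolution oracle, with the skew vector $d[k] = k$ — the instance that arises when one asks, for a consecutive block of jobs in due-date order, how much the last completion time can be decreased by deferring a sub-block of total processing time $j$. Apart from these oracle calls, the reduction does $\widetilde O(P)$ bookkeeping work, and the oracle is invoked either polylogarithmically many times on vectors of length $O(P)$, or on a family of instances whose lengths sum to $\widetilde O(P)$.

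Given this, the steps are: (i) invoke the reduction of~\cite{bringmann_deadline_scheduling}, treating its convolution subroutine as a black box with the stated interface (vectors $a,b$ of length $n$, skew vector $d$ of length $2n$, output $c[k]=\max_{i+j=k}\min\{a[i],b[j]+d[k]\}$); (ii) implement every oracle call using the algorithm of Theorem~1, which handles an arbitrary skew vector $d$ — in particular $d[k]=k$ — in time $O(n^{5/3}\log n)$; (iii) bound the total running time by $\widetilde O(P) + \sum_\ell O(P_\ell^{5/3}\log P_\ell)$, where $P_\ell = O(P)$ are the sizes of the oracle instances; since $x \mapsto x^{5/3}$ is superadditive, $\sum_\ell P_\ell^{5/3} \le \big(\sum_\ell P_\ell\big)^{5/3} = \widetilde O(P^{5/3})$, so the total is $\widetilde O(P^{5/3})$, dominating the $\widetilde O(P)$ overhead; (iv) note that correctness is inherited verbatim from~\cite{bringmann_deadline_scheduling}, since only the implementation of a subroutine is changed and its specification is matched exactly.

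The one point that genuinely needs checking — and the only real obstacle — is the accounting inside the reduction: that~\cite{bringmann_deadline_scheduling} indeed uses the convolution purely as a black box, that the vectors it feeds in never exceed length $O(P)$, and that the number of calls together with all non-convolution work is near-linear in $P$. If, as I expect, their reduction is already phrased with an abstract convolution oracle of running time $T(n)$, yielding an $\widetilde O(T(P) + P)$-time scheduling algorithm, then the corollary is immediate upon setting $T(n) = O(n^{5/3}\log n)$; otherwise one re-reads their analysis to extract this dependence on $T(\cdot)$, which is routine.
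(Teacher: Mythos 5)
Your proposal matches the paper's argument exactly: the corollary is obtained by plugging the improved $O(n^{5/3}\log n)$ max-min skewed convolution algorithm (Theorem~1) into the reduction of Bringmann et al.~\cite{bringmann_deadline_scheduling}, which uses the convolution purely as a black box with $d[k]=k$. The paper states this as an immediate consequence without further elaboration, so the additional bookkeeping you sketch (instance sizes, superadditivity) is more detail than the paper itself gives but is consistent with it.
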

Since convolution algorithms are often
used as building blocks for other problems,
we believe that this result is of independent interest, in particular, since our algorithm also works for arbitrary $d[k]$.
From a technical point of view, the algorithm
can be seen as a two-dimensional generalization of the approach used in the $\widetilde O(n^{3/2})$ time algorithm for max-min convolution~\cite{maxminconv}.
This is quite different and more compact than
the algorithm by Bringmann et al.~\cite{bringmann_deadline_scheduling}, which only
relies on the ideas in~\cite{maxminconv} indirectly by invoking max-min convolution as a black box.

\paragraph*{Parameterization by the maximum processing time.}
As mentioned before, Subset Sum has been extensively studied with respect to running times in the maximum value $a_{\max}$.
Our second contribution is that we show that running time in similar spirit can also be achieved for~\problem.
Based on techniques from integer programming, namely, Steinitz Lemma type of arguments that have also been crucial for Subset Sum, we show new structural properties of solutions for the problem. Based on this structural understanding, we develop an algorithm that leads to the following result.
\begin{theorem}
The problem~\problem can be solved in time $\widetilde O(n + p^3_{\max})$.
\end{theorem}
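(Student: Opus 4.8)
The plan is to recast the scheduling problem as a compact, triangularly structured integer linear program (ILP), then use a Steinitz-lemma argument in the spirit of Eisenbrand--Weismantel to show that one of its integral optima lies close to a canonical solution computable in $\widetilde O(n)$ time, and finally enumerate the remaining (small) search space with a tailored dynamic program. For the reformulation: after sorting the jobs by due date in $\widetilde O(n)$ time, observe that within each of the at most $\pmax$ distinct processing-time classes it is never worse to select the jobs with the largest due dates, so a solution is described by a vector $(k_1,\dots,k_{\pmax})$ recording how many jobs of each size are selected, and the earliest-due-date schedule is feasible exactly when, for every distinct due date $d$, the total size of selected jobs with due date at most $d$ is at most $d$. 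Since for a fixed $d$ each size class contributes a convex piecewise-linear function of $k_p$, the feasible set lives in $\pmax$-dimensional space, is cut out by prefix-load inequalities whose coefficients all lie in $\{1,\dots,\pmax\}$, and we maximize the linear objective $\sum_p p\,k_p$ over its integer points --- this is exactly the staircase-structured ILP the paper highlights.

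The technical core is a proximity statement: there is an integral optimum whose selection differs from a fixed canonical optimum --- obtained from an LP optimum or an appropriate greedy solution --- by only a polynomial in $\pmax$, measured as the $\ell_1$ distance of the two solution vectors, and hence also as the maximum deviation between the two cumulative-load profiles. I would prove it in the standard way: take the difference of the integral and the canonical solution, decompose it into $\pm$ unit job-insertions and removals, and reorder these by the Steinitz lemma so that every prefix sum of the resulting constraint-slack vector is small; a shorter prefix reaching the same endpoint can then be spliced back in, contradicting optimality unless the original deviation was already small. The obstacle is that the constraint matrix is triangular with one row per distinct due date, so the number of rows is not bounded by $\mathrm{poly}(\pmax)$, and the off-the-shelf proximity bounds --- which scale with the number of constraints --- are useless here. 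The remedy is to apply the Steinitz lemma not to the full rows but to the per-due-date increments of the constraint vector: each increment is supported on the $\le\pmax$ size classes and has entries bounded by $\pmax$, so the effective dimension, and thus the proximity bound, depends on $\pmax$ alone. Nailing down this argument and the exact polynomial --- and checking that the canonical solution is extremal enough that a one-sided version of the bound suffices --- is where I expect the bulk of the difficulty to lie.

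Finally I would turn the structure into a dynamic program, mirroring the $O(n+a_{\max}^3)$ algorithm for Subset Sum. Proximity gives two things: the optimum's cumulative load stays within an $O(\pmax^2)$-wide window of the canonical profile at every due date, and the optimum and the canonical solution make different selections at only $O(\pmax)$ due dates, so the DP is forced to follow the canonical choice at the rest and can jump straight from one ``active'' due date to the next in amortized $O(1)$ time. Processing the due dates in increasing order while tracking only the $O(\pmax^2)$ load values inside the window, and using that the objective is linear in the number of extra jobs of each size so that each transition reduces to a $(\max,+)$-update of shifted sliding-window-maximum type --- evaluable in amortized $O(1)$ per state --- yields a running time of $\widetilde O(\pmax^3)$ (the cube arising from a linear number of active phases times a quadratic load window). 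Adding the $\widetilde O(n)$ for reading the input, sorting, and computing the canonical solution gives the claimed $\widetilde O(n+\pmax^3)$ bound, and correctness is immediate from the proximity result, which certifies that some optimum is contained in the searched window.
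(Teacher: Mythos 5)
Your proposal shares the paper's high-level Eisenbrand--Weismantel flavor but the concrete route is different, and the DP as described does not actually reach $\widetilde O(p_{\max}^3)$. The paper does not pass through an LP optimum or a Steinitz-reordering argument at all. Its Lemma~\ref{lem:pmax-structur} is a direct pigeonhole exchange: if at some cut index $i$ there are $\ge 2p_{\max}$ selected jobs before $i$ and $\ge 2p_{\max}$ rejected jobs after $i$, one can extract equal-volume subsets to swap, so in an optimal solution some cut $i$ has at most $O(p_{\max})$ selected jobs before it and at most $O(p_{\max})$ rejected jobs after it. A single $O(n)$ sweep (Lemma~\ref{lem:pmax-idx}) then pins down a concrete index $\ell$ so that both deviations from the trivial ``take nothing before $\ell$, take everything after $\ell$'' baseline have total volume $O(p_{\max}^2)$. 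The paper then binary-searches on the objective value, reduces each threshold test to ``exists an idle-free schedule'' by padding with dummy jobs, and runs two independent DPs (Lemmas~\ref{lem:pmax-dyn1}, \ref{lem:pmax-dyn2}) whose state is \emph{not} a due-date index but a time boundary $T \in \{1,\dots,O(p_{\max}^2)\}$, storing for each $T$ the \emph{smallest job index} that can fill $[0,T]$ exactly. Each $T$ is processed in $O(p_{\max}\log p_{\max})$ by ranging over the at most $p_{\max}$ processing-time classes, giving $O(p_{\max}^3\log p_{\max})$. Your per-size-class aggregation is a valid observation, but the paper neither needs nor uses it.

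The concrete gap in your proposal is the unsupported claim that ``the optimum and the canonical solution make different selections at only $O(p_{\max})$ due dates.'' Any proximity bound in this setting (including the paper's) naturally controls the total volume or count of jobs that flip, on the order of $p_{\max}^2$, and these $\Theta(p_{\max}^2)$ flipped jobs can easily be spread over $\Theta(p_{\max}^2)$ distinct due dates. With a load window of width $\Theta(p_{\max}^2)$ maintained at each active due date, your DP would then have $\Theta(p_{\max}^2)\cdot\Theta(p_{\max}^2)=\Theta(p_{\max}^4)$ states, not $p_{\max}^3$. The $(\max,+)$/sliding-window speedup you invoke is also not obviously applicable, because the admissible increments at a due date depend on which specific job sizes are available there, not on a fixed convolution kernel. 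The paper sidesteps this entirely by indexing the DP on the time value $T$ (a single $O(p_{\max}^2)$-range coordinate) and exploiting that the transition only needs, for each processing-time class, a monotone search over the $O(p_{\max}^2)$ earliest-due-date jobs of that class; this is where the extra factor becomes $p_{\max}$ rather than $p_{\max}^2$. You also leave the proximity lemma itself as a gap and acknowledge it; note that the paper's exchange argument avoids Steinitz reordering altogether and may be a cleaner target than trying to push the Steinitz machinery through a constraint matrix with $n$ rows and piecewise-linear, non-linear constraints after your aggregation. Finally, your write-up omits the reduction to an idle-free decision problem via dummy jobs, which the paper needs to make the DP objective well-defined.
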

For the generalized problem \multiplemachines with multiple machines, we present an algorithm relying on a different structural property.
\begin{theorem}
The problem~\multiplemachines can be solved in time  $O(n \cdot \pmax^{O(m)})$.
\end{theorem}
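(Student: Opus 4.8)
The plan is to phrase the problem as a dynamic program over the vector of machine completion times, processed in earliest‑due‑date order, and then to use a proximity argument — in the same spirit as the ideas behind the $\widetilde O(n+\pmax^3)$ result above — to show that only completion‑time vectors lying in a box of side $\mathrm{poly}(\pmax,m)$ (after a normalization) are ever needed. Concretely, relabel the jobs so that $d_1\le d_2\le\dots\le d_n$; as recalled in the introduction, if a set $S$ of jobs can meet its due dates on $m$ machines, it can be scheduled so that each machine runs its jobs in earliest‑due‑date order, so we restrict to such schedules. We process jobs $1,2,\dots,n$ in this order; a state after job $t$ is the sorted vector $\mathbf C=(C_1\le C_2\le\dots\le C_m)$ of machine completion times (equivalently, loads), and we keep $f_t(\mathbf C)$, the minimum total processing time of tardy jobs among $\{1,\dots,t\}$ over schedules reaching configuration $\mathbf C$. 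For job $j=t+1$ the transitions are: declare it tardy, giving $f_{t+1}(\mathbf C)\le f_t(\mathbf C)+p_j$; or assign it to some machine $i$ with $C_i+p_j\le d_j$, replacing $C_i$ by $C_i+p_j$ (and re‑sorting) while keeping the penalty. The answer is $\min_{\mathbf C} f_n(\mathbf C)$. This is correct but has $P^{O(m)}$ states, with $P=\sum_j p_j$, which is too many.

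The structural step is to compare an arbitrary optimal schedule against a canonical \emph{balanced reference} — for instance the integral rounding of the optimum of a natural (block‑structured) ILP/LP formulation of the problem, or the trajectory obtained by always placing a to‑be‑scheduled job on the currently least‑loaded eligible machine. Using a Steinitz‑lemma‑type exchange argument, I would reorder and locally modify the decisions (schedule‑or‑skip, and onto‑which‑machine) of the optimal solution so that it agrees with the reference up to $\mathrm{poly}(\pmax,m)$ modifications and, consequently, along the earliest‑due‑date sweep its sorted load vector stays within $\ell_\infty$‑distance $\mathrm{poly}(\pmax,m)$ of the componentwise sorted reference load vector. This yields an optimal solution all of whose reachable states lie in a box of side $\mathrm{poly}(\pmax,m)$ around the (DP‑computable) reference trajectory. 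Restricting the dynamic program to that box leaves $\pmax^{O(m)}$ states (absorbing the factors of $m$ into the exponent); each transition costs $O(m)$ for the choice of machine plus an $O(m\log m)$ re‑sort; the table is filled in $n$ rounds; and with an $O(n)$‑time preprocessing pass (sorting the jobs, bucketing equal processing times) this gives the claimed $O(n\cdot\pmax^{O(m)})$ running time.

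The main obstacle is the structural lemma. For a single machine ($m=1$) the exchange argument reduces to a one‑dimensional Steinitz / prefix‑sum argument on the sequence of scheduled‑and‑skipped jobs, which is essentially the content of the $\widetilde O(n+\pmax^3)$ algorithm. With $m\ge 2$ two new difficulties appear: (i) the coupling between machines — removing a job from an overloaded machine and reinserting it elsewhere requires a valid slot respecting the earliest‑due‑date order on the target machine, which may itself be momentarily ahead of its reference value; and (ii) perfect balance need not be attainable, since jobs with tight due dates can be forced onto particular machines. I expect the right formulation to track the deviation of the \emph{sorted} load vector from the \emph{sorted} reference vector, with exchange steps that move one small job at a time from a coordinate that is too far ahead of its reference value to one that is too far behind, together with a Steinitz‑type potential (a sum of signed prefix deviations) certifying that every intermediate deviation stays $\mathrm{poly}(\pmax,m)$‑bounded. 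Verifying that each such local exchange preserves feasibility throughout the sweep, and that the exchanges compose consistently across all earliest‑due‑date steps, is the technical heart of the argument.
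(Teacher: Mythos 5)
Your high-level plan matches the paper's: a dynamic program indexed by the machine-load vector after each prefix of jobs (sorted by due date), combined with a proximity/structural lemma that confines the needed load vectors to a window of width $\mathrm{poly}(\pmax)$ around a reference trajectory, yielding $\pmax^{O(m)}$ states per job. Your back-of-the-envelope running-time accounting is also consistent with the paper's.

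However, there is a genuine gap, and you identify it yourself: the structural lemma is the heart of the argument, and your proposal only sketches candidate formulations without proving any of them. Two things in the paper are worth highlighting that your sketch does not pin down. First, the paper's reference trajectory is a concrete, input-only quantity: it restricts $\ell_i(j)$ to $\min_{j'>j}\bigl\{d_{j'} - \tfrac{1}{m}\sum_{j''=j+1}^{j'} p_{j''}\bigr\} + k$ with $k \in \{-O(\pmax^2),\dotsc,O(\pmax^2)\}$, i.e., ``the latest load on a perfectly balanced machine that still leaves room, on average, for all remaining jobs.'' This is quite different from both of your proposed references (LP-rounding of an optimum, or a greedy least-loaded trajectory), and crucially it is computable in $O(n)$ time without solving any auxiliary optimization problem. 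Second, the paper's structural lemma is split into two separate assertions: (a) for every prefix and every pair of machines the loads differ by $O(\pmax^2)$, and (b) each load is at most the above threshold plus $O(\pmax^2)$; combined with a simple greedy argument they also show loads far \emph{below} the threshold need not be tracked. The proof proceeds by a sequence of swaps, and your two stated worries are real and are exactly what the paper handles: to avoid ruining earlier prefixes when rebalancing prefix $j$, the paper only rebalances at positions $j''$ where the average load has increased by more than $6\pmax^2$ since the last rebalanced position, guaranteeing that any job moved is scheduled late enough not to have a due date before $d_j$; and establishing (b) may temporarily destroy (a), so the paper alternates the two kinds of swaps and uses a monotone potential (the average due date of scheduled jobs increases) to argue termination. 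Without these specific mechanisms, your proposal does not yet constitute a proof.
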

Similar algorithmic results with running times depending on the parameter $\pmax$ have been developed for makespan scheduling and minimizing weighted completion time (without due dates) \cite{knop2017scheduling}.

\paragraph*{Integer programming generalization and lower bounds.}

The problem \problem can be formulated as an integer linear program (ILP) with binary
variables as follows:

\begin{equation}
\text{maximize} \quad \sum_j p_jx_j \quad
\text{subject to} \quad
\begin{pmatrix}
p_1 & 0 & \cdots & 0\\
p_1 & p_2 & \ddots &  \\
\vdots &  & \ddots & 0 \\
p_1 & p_2 & \cdots & p_n
\end{pmatrix}
\cdot x \le
\begin{pmatrix}d_1\\ d_2 \\ \vdots \\ d_n\end{pmatrix}, \quad x \in \{0, 1\}^n.
\label{ILP_problem}
\end{equation}
Here, variable $x_j$ indicates whether job $j$ is selected in the solution (or, in the alternative formulation, finished within its due date).
The objective is to maximize the processing time of the selected jobs.
The necessary and sufficient conditions are that
the total volume of selected jobs with due date at most some time $t$ does not
exceed $t$. It suffices to consider only constraints for $t$ equal to one of
the jobs' due dates.

Clearly, the shape of the non-zeros in the constraint matrix of the ILP exhibits a very special
structure, a certain triangular shape.
In recent years there has been significant attention in the area of structured integer programming on identifying parameters and structures for which integer programming is tractable~\cite{DBLP:conf/soda/CslovjecsekEHRW21, CslovjecsekEPVW21, JansenKL21, klein_multistage, KouteckyLO18, EisenbrandHK18}. The most prominent example in this line of work are so-called $n$-fold integer
programs (see~\cite{DBLP:conf/soda/CslovjecsekEHRW21} and references therein), which are of the form
\[
\text{maximize} \quad c^T x \quad \text{subject to} \quad
\begin{pmatrix}
A_1 & A_2 & \cdots & A_n \\
B_1 & 0 &  & 0\\
0 & B_2 & \ddots &   \\
\vdots & \ddots & \ddots &  0 \\
0 & \cdots & 0 & B_n
\end{pmatrix}
\cdot x = b \quad \text{and} \quad \forall_i \ x_i \in \mathbb{Z}_{\geqslant 0}.
\]
Here $A_i$ and $B_i$ are ``small'' matrices in the sense that it is considered
acceptable for the running time to depend superpolynomially (e.g., exponentially) on their parameters.
It is known that these types of integer programs
can be solved in FPT time $\max_i f(A_i, B_i) \cdot \mathrm{poly}(|I|)$,
where $f(A_i, B_i)$ is a function that
depends only on the matrices $A_i$ and $B_i$ (potentially superpolynomially),
but not on $n$ or any other parameters,
and $\mathrm{poly}(|I|)$ is some polynomial in the encoding length of the input~\cite{DBLP:conf/soda/CslovjecsekEHRW21}.
There exist generalizations of this result and also other tractable structures,
but none of them captures the triangular shape of~(\ref{ILP_problem}).

We now consider a natural generalization of $n$-fold ILPs that also contains
the triangle structure~(\ref{ILP_problem}), in the remainder referred to as \emph{triangle-fold} ILPs.
\[
\text{maximize} \quad c^T x \quad \text{subject to} \quad
\begin{pmatrix}
A_1 & 0 & \cdots & 0\\
A_1 & A_2 & \ddots &  \\
\vdots &  & \ddots & 0 \\
A_1 & A_2 & \cdots & A_n \\
B_1 & 0 &  & 0\\
0 & B_2 & \ddots &   \\
\vdots & \ddots & \ddots &  0 \\
0 & \cdots & 0 & B_n
\end{pmatrix}
\cdot x \le b \quad \text{and} \quad \forall_i \ x_i \in \mathbb{Z}_{\geqslant 0}.
\]

Let us elaborate on some of the design choices and why they come naturally.
First, it is obvious that this structure generalizes $n$-fold, because
constraints can be ``disabled'' by selecting a right-hand side of $\infty$
(or some sufficiently large number). After disabling a prefix of constraints, we
end up with exactly the $n$-fold structure except that we have inequality ($\le$)
constraints instead of equality constraints. Clearly, equality constraints can
easily be emulated by duplicating and negating the constraints.
The reason we chose inequality is that with equality constraints this ILP would
directly decompose into independent small subproblems, which would be uninteresting from an algorithmic point of view and also not general enough to capture~(\ref{ILP_problem}).
The matrices $B_1, \dotsc, B_n$ can, for example, be used to express lower and upper bounds on variables, such as the constraints $x_i \in \{0, 1\}$ in~(\ref{ILP_problem}).

With this formulation it is also notable that the
scheduling problem on multiple machines, \multiplemachines, can be modelled easily: instead of one decision variable $x_j \in \{0, 1\}$ for each job $j$, we introduce
variables $x_{j,1},x_{j,2},\dotsc,x_{j,m} \in \{0, 1\}$, one for each machine.
Then we use the constraints $p_1 x_{1,i} + p_2 x_{2,i} + \cdots + p_j x_{j,i} \le d_j$
for each machine $i$ and job $j$, which ensure that on each machine
the selected jobs can be finished within their respective due date.
Finally, we use constraints of the form $x_{j,1} + x_{j,2} + \cdots + x_{j,m} \le 1$ to guarantee that each job is scheduled on at most one machine.
It can easily be verified that these constraints have the form of a triangle-fold
where the small submatrices have dimension only dependent on $m$.

Given the positive results for \problem and \multiplemachines, one may hope
to develop a general theory for triangle-folds. Instead of specific techniques
for these (and potentially other) problems, in this way one could create general
techniques that apply to many problems. For example, having an algorithm with the running time of
the form $\max_i f(A_i, B_i) \cdot \mathrm{poly}(|I|)$ (like we have for $n$-folds), one would directly get an FPT algorithm for \multiplemachines with parameters $p_{\max}$ and $m$.
However, we show strong hardness results, which indicate that
such a generalization is not possible.
\begin{theorem}\label{th:trianglefold}
  There exist fixed matrices $A_i, B_i$ of constant dimensions for which
  testing feasibility of triangle-fold ILPs is NP-hard.
  This holds even in the case when $A_1 = A_2 = \cdots = A_n$ and
  $B_1 = B_2 = \cdots = B_n$.
\end{theorem}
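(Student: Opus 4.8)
The plan is to establish NP-hardness by a polynomial-time reduction from a strongly NP-hard number problem; the natural first attempt is 3-Partition, whose item sizes are polynomially bounded and can therefore be ``spelled out'' across many blocks. The key reformulation driving the construction is that, once all $A_i$ equal a fixed matrix $A$ and all $B_i$ equal a fixed matrix $B$, a triangle-fold ILP is exactly the choice of a componentwise nondecreasing integer sequence $0 = P_0 \le P_1 \le \dots \le P_n$ — the prefix sums $P_i = x_1 + \dots + x_i$ — subject to a \emph{position constraint} $A\,P_i \le b_i$ and an \emph{increment constraint} $B\,(P_i - P_{i-1}) \le b_i^{\mathrm{bot}}$ at every index $i$, with $A, B$ fixed and of constant size and only the right-hand sides carrying instance data. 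In particular all interaction between blocks is funneled through the constant-dimensional ``state'' $P_i$: a block can read the running totals but can neither reference nor isolate any individual earlier block. Hence the entire combinatorial structure of the source instance must be encoded into the right-hand-side schedule and into the number $n$ of blocks. I would design the fixed gadget $(A,B)$ so that each block makes one bounded decision through its local constraints and so that a constant number of coordinates of $P_i$ serve as counters tallying those decisions; the right-hand sides $b_i$ then pin these counters to prescribed values at prescribed positions, turning the target equalities (``each part sums to the target'' and ``each item is used exactly once'') into assertions that certain prefix sums equal certain numbers.

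Carrying this out, I would order the blocks so that one axis of the problem (say, the target part) is contiguous and, within each part, the items are enumerated with each item's size spent in unary over consecutive blocks; then ``sum over the current part'' becomes a difference of two prefix sums, i.e.\ the assertion that a prefix sum reaches a prescribed value. The transversal family of constraints (``each item used exactly once, as a whole'') would be tracked by a second group of accumulator coordinates propagated along the sequence, again without any per-block subtraction, which the triangle structure forbids. Completeness — a YES-instance yielding a feasible ILP — is immediate: the intended solution is just the sequence of prefix sums. Soundness is the delicate direction: a feasible integer point assigns values to all the auxiliary, a priori unbounded, nonnegative variables, and one must argue these cannot ``cheat'' — they must describe an honest partition — which reduces to showing that the local constraints force every decision variable into $\{0,1\}$ and that the numerical window $(B/4, B/2)$ of 3-Partition forces exactly three items per part.

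The main obstacle — the technical heart of the proof — is the severe inexpressiveness of the permitted constraints. With fixed (instance-independent) matrix entries one cannot impose ``this increment is either $0$ or $a_i$'' for an instance number $a_i$: any such two-point constraint forces either a matrix entry equal to $a_i$ or a ``big-$M$'' constant dominating $a_i$, both illegal when $A, B$ are fixed. But the all-or-nothing assignment of an item to a part is precisely a constraint of this kind, so the straightforward packing gadget is unavailable, and the whole difficulty is to route the hard core of the source problem through constraints of the allowed shape — fixed-shape polyhedra, merely translated by the right-hand side. I expect overcoming this to be what forces the unary spelling-out of the numbers (hence a strongly NP-hard source) together with a nontrivial interplay among the block ordering, the counter coordinates, and the structural rigidity of the source instance, so that item-indivisibility emerges as a consequence of the instance's numerical structure rather than being asserted outright; it may even be that a packing-style source has to be replaced by a different NP-hard problem better matched to the allowed constraint shapes. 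The remaining, more routine obstacle — spurious feasibility from unbounded auxiliary variables — I would dispatch by giving each such variable a fixed-coefficient upper bound in $B$ whose value is supplied through $b_i^{\mathrm{bot}}$.
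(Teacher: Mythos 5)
Your setup is on the right track: viewing the ILP through prefix sums $P_i = x_1 + \cdots + x_i$, with $A P_i \le b_i$ and $B(P_i - P_{i-1}) \le b_i$, is exactly the right lens, and your observation that with a \emph{fixed} matrix you cannot write a constraint like ``increment is $0$ or $a_i$'' is the correct identification of the central obstacle. But the conclusion you draw from that obstacle --- that the instance numbers must be ``spelled out in unary'' and hence the source problem must be strongly NP-hard (3-Partition) --- is where the proposal goes wrong, and it is precisely here that the paper's one key idea lives, which you do not supply.

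The paper's resolution is that the triangular structure can \emph{force a doubling sequence}: with a tiny fixed $A$ (in the paper, $2\times 2$ entries $1,-2,1,-1$), the constraints $y_1 \le 1$, $y_i - z_i = 0$, and $y_1+\cdots+y_i - (z_1+\cdots+z_i) - z_{i+1} \cdot \text{(suitable coefficient)} = 0$ collapse to $z_i = y_i$ and $y_{i+1} = y_1 + \cdots + y_i$, whose only nonnegative integer solutions are the all-zeros vector and $(1,1,2,4,8,\ldots)$. This hands you powers of two --- i.e.\ a binary positional system --- out of thin air, using constant-size fixed matrices and a polynomial number of blocks. Consequently one can read off $n$-bit instance numbers by switching individual bit positions on or off via the $\{0,+\infty\}$ upper bounds, and a reduction from plain (weakly NP-hard) Subset Sum suffices; no unary encoding and no strongly NP-hard source is needed. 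Your proposal never constructs such a gadget; you leave the ``technical heart'' as a hope (``I expect overcoming this to be what forces the unary spelling-out \ldots it may even be that a packing-style source has to be replaced''), so the proof is incomplete at exactly the step that makes the theorem true. Even if you did pursue 3-Partition with unary counters, you would still need to design a concrete fixed gadget enforcing per-block $\{0,1\}$ choices and the exactly-three-per-part structure, none of which is exhibited; the proposal is a plan, not a proof.

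As a smaller point, the paper also needs (and supplies) a trick for selectively cancelling rows and columns of the otherwise rigid repeated block --- turning individual constraints into ``$\le +\infty$'' and individual variables into fixed zeros --- to get the staggered structure it wants. Your proposal implicitly assumes this kind of selective shaping is available but never justifies it; it is worth making explicit, since without it the repeated-block structure is very hard to bend into a gadget at all.
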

This hardness result has a surprising quality: not only are
there no FPT algorithms for triangle-folds, but also no XP algorithms, which
is in stark contrast to other similarly shaped cases.

\section{Max-min skewed convolution}
Recall that max-min skewed convolution is defined as the problem where, given
vectors $a$, $b$ and $d$, we want to compute, for each $k = 0,1,\dotsc,2n-1$, the value
\begin{equation*}
    c[k] = \max_{i + j = k}\{\min\{a[i], b[j] + d[k]\} .
\end{equation*}
In this section we give an algorithm that solves this problem in time $O(n^{5/3}\log n)$.
\begin{figure}
    \centering
    \begin{align*}
    a &= (5, 6, 4, 9, 1, 7, 3, 8, 2) \\
    b &= (8, 4, 1, 2, 3, 7, 6, 5, 9) 
    \end{align*}
    \begin{tabular}{c c c c c c c c c c}
    & $b[0]$ & $b[1]$ & $b[2]$ & $b[3]$ & $b[4]$ & $b[5]$ & $b[6]$ & $b[7]$ & $b[8]$ \\

$a[0]$ & 0 & 1 & 1 & 1 & 1 & 1 & 1 & 1 & 0 \\
$a[1]$ & 0 & 1 & 1 & 1 & 1 & 1 & 1 & 1 & 0 \\
$a[2]$ & 0 & 1 & 1 & 1 & 1 & 1 & 1 & 1 & 0 \\
$a[3]$ & 0 & 0 & 1 & 1 & 0 & 0 & 0 & 0 & 0 \\
$a[4]$ & 1 & 1 & 1 & 1 & 1 & 1 & 1 & 1 & 0 \\
$a[5]$ & 0 & 1 & 1 & 1 & 1 & 0 & 0 & 0 & 0 \\
$a[6]$ & 1 & 1 & 1 & 1 & 1 & 1 & 1 & 1 & 0 \\
$a[7]$ & 0 & 0 & 1 & 1 & 0 & 0 & 0 & 0 & 0 \\
$a[8]$ & 1 & 1 & 1 & 1 & 1 & 1 & 1 & 1 & 0 \\
    \end{tabular}
    \caption{Matrix $M_k$ for $k = 6$ and vectors $a$ and $b$, which it is derived from.
    The entry in cell for row $a[i]$ and column $b[j]$ describes whether there exist $i', j'$ with $i' + j' = k$, $a[i'] \ge a[i]$ and $b[j'] \ge b[j]$. For example, the entry for $a[2]$ and $b[2]$ is $1$ because $a[3] \geq a[2]$ and $b[3] \geq b[2]$.}
    \label{fig:conv-preprocessing}
\end{figure}
Assume without loss of generality that the values in $a$ and $b$
are all different. This can easily be achieved by multiplying all values (including those in $d$) by $4 n$ and adding a different number from $0,1,\dotsc,2n-1$ to each entry in $a$ and $b$.
After computing the convolution we then
divide the output numbers by $4n$ and round down to
reverse the effect of these changes.

The algorithm consists of two phases. In the first phase we build a data
structure to aid the computation of each $c[k]$. Then in the second phase we compute each element $c[k]$ separately with a binary search.
The goal of the data structure is thus to efficiently answer queries
of the form ``is $c[k] > v$?'' for some given $k$ and $v$.

First we introduce a quite excessive data structure.
Building it explicitly is impossible within our running time goals, but it will help gain intuition. Suppose for every $k$ we have a matrix $M_k$ (see Figure~\ref{fig:conv-preprocessing}), where the rows correspond to the values $a[0],a[1],\dotsc,a[n-1]$, the columns
correspond to $b[0],b[1],\dotsc,b[n-1]$, and the entries tell us whether, for some $a[i], b[j]$, there exist $i'$ and $j'$ with $i' + j' = k$, $a[i'] \ge a[i]$ and $b[j'] \ge b[j]$. This matrix contains
enough information to answer the queries above:
To determine whether $c[k] > v$, we need to check if
there are $i', j'$ with $i' + j' = k$ and $a[i'] > v$ and $b[j'] > v - d[k]$.
Choose the smallest $a[i]$ such that $a[i] > v$ and the smallest $b[j]$ such that
$b[j] > v - d[k]$. If such $a[i]$ or $b[j]$ does not exist, then we already know that $c[k] \le v$. Otherwise, if both elements exist, the entry $M_k[a[i], b[j]]$ directly gives us the answer:
If it is $0$, then for all $i' + j' = k$ either (1) $a[i'] < a[i]$ and hence (since $a[i]$ is the smallest element greater than $v$) $a[i'] \le v$, or (2) $b[j'] < b[j]$ and hence $b[j'] \le v - d[k]$; thus $c[k] \le v$. The converse is also true: Suppose that $c[k] \le v$. Then for all $i'$, $j'$ with $i' + j' = k$
we have that either $a[i'] \le v < a[i]$ or $b[j'] \le v - d[k] < b[j]$, and thus the matrix entry at $(a[i], b[j])$ is $0$.

It is obvious that we cannot afford to explicitly construct the matrices described above; their space alone would be cubic in $n$. Instead, we are only going to compute a few carefully chosen values of these matrices, that allow us to recover any other value in sublinear time.
First, reorder the columns and rows so that
the corresponding values ($a[i]$ or $b[j]$) are increasing.
We call the resulting matrix $M^{\mathrm{sort}}_k$, see also Figure~\ref{fig:conv-preprocessing2}.
Clearly, this does not change the information stored in it, but one can observe that now the rows and columns are each nonincreasing.
We will compute only the values at intersections of every $\lfloor n/p \rfloor$-th row and every $\lfloor n / p \rfloor$-th column, for a parameter $p \in \mathbb N$, which is going to be specified later. This means we are computing a total of $O(n p^2)$ many values. Although computing a single entry of one of the matrices $M^{\mathrm{sort}}_k$ would require linear time, we will show that computing the same entry for all matrices (all $k$) can be done much more efficiently than in $O(n^2)$ time.

Indeed, for fixed $u, w \in \mathbb Z$, it takes only $O(n \log n)$ time to compute, for all $k$, whether there exists $i'$ and $j'$ with $i' + j' = k$,  $a[i'] \ge u$ and $b[j'] \ge v$. This fact follows from a standard application of Fast Fourier Transformation (FFT): we construct two vectors $a', b'$ where
\begin{equation*}
    a'[i] = \begin{cases}
    1 &\text{ if } a[i] \ge u, \\
    0 &\text{ otherwise,}
    \end{cases}
    \quad
    b'[i] = \begin{cases}
    1 &\text{ if } b[j] \ge w, \\
    0 &\text{ otherwise,}
    \end{cases}
\end{equation*}
and compute their $(+, \cdot)$-convolution with FFT. For non-zero output entries there exist $i', j'$ as above, and for zero entries they do not.
It follows that computing the selected $O(n p^2)$ values of
$M^{\mathrm{sort}}_0, M^{\mathrm{sort}}_1, \ldots, M^{\mathrm{sort}}_{2n-2}$
can be done in time $O(p^2 \cdot n \log n)$. 
\begin{figure}
    \centering
    \begin{tabular}{c c c c c c c c c c}
       & $b[2]$ & $b[3]$ & $b[4]$ & $b[1]$ & $b[7]$ & $b[6]$ & $b[5]$ & $b[0]$ & $b[8]$ \\
$a[4]$ & 1 & 1 & 1 & 1 & 1 & 1 & 1 & 1 & 0 \\
$a[8]$ & 1 & 1 & 1 & 1 & 1 & 1 & 1 & 1 & 0 \\
$a[6]$ & 1 & 1 & \fbox{1} & 1 & 1 & \fbox{1} & 1 & 1 & \fbox{0} \\
$a[2]$ & 1 & 1 & 1 & 1 & 1 & 1 & 1 & 0 & 0 \\
$a[0]$ & 1 & 1 & 1 & 1 & 1 & 1 & 1 & 0 & 0 \\
$a[1]$ & 1 & 1 & \fbox{1} & 1 & 1 & \fbox{1} & 1 & 0 & \fbox{0} \\
$a[5]$ & 1 & 1 & 1 & 1 & 0 & 0 & 0 & 0 & 0 \\
$a[7]$ & 1 & 1 & 0 & 0 & 0 & 0 & 0 & 0 & 0 \\
$a[3]$ & 1 & 1 & \fbox{0} & 0 & 0 & \fbox{0} & 0 & 0 & \fbox{0} \\
    \end{tabular}

    \caption{Matrix $M^{\mathrm{sort}}_k$, which is identical to that in Figure~\ref{fig:conv-preprocessing} except for reordering of rows and columns. Highlighted are the elements that we compute during preprocessing (assuming $p=3$).}
    \label{fig:conv-preprocessing2}
\end{figure}

We now consider the second phase, where the algorithm computes $c[k]$, for each $k$ separately, using binary search. To this end, we design a procedure (see Algorithm~\ref{alg:bs}) to determine, given $k \in \{0,1,2,\ldots,2n-2\}$ and $v \in \mathbb{Z}$, whether $c[k] > v$. The procedure will run in $O(n/p)$ time. As described in the beginning of the proof, this corresponds to computing the value of a specific cell $(a[i], b[j])$ in the matrix $M^{\mathrm{sort}}_k$. If this cell happens to be among the precomputed values, we are done.
Otherwise, consider the $\lfloor n/p \rfloor \times \lfloor n/p \rfloor$ submatrix that
encloses $(a[i], b[j])$ and whose corners are among the precomputed values.
If the lower right corner $(a[i''], b[j''])$ is equal to one,
then entry $(a[i], b[j])$ must also be one by monotonicity.
Hence, assume otherwise. The entry $(a[i], b[j])$ could
still be one, but this happens only if
there is a \emph{witness} $(i', j')$ that satisfies $i' + j' = k$ and
\begin{enumerate}
    \item $a[i''] > a[i'] \ge a[i]$ and $b[j'] \ge b[j]$, or
    \item $b[j''] > b[j'] \ge b[j]$ and $a[i'] \ge a[i]$.
\end{enumerate}
The number of possible witnesses for the first case is bounded by $n / p$, since there are only $\lfloor n/p \rfloor$ many values $a[i']$ between $a[i]$ and $a[i'']$ (since they are in the same $\lfloor n/p \rfloor \times \lfloor n/p \rfloor$ submatrix) and the corresponding $j'$ is fully determined by $i'$. Likewise, there are at most $n / p$ many possible witnesses for the second case. Hence, we can compute the value of the cell $(a[i], b[j])$ by exhaustively checking all these candidates for a witness, i.e.,
\[\{ (i', k-i') \mid a[i'] \in [a[i], a[i'']) \} \cup \{ (k-j', j') \mid b[j'] \in [b[j], b[j'']) \}.\]

\begin{algorithm}
\caption{Procedure used in binary search to check whether $c[k] > v$.}
\label{alg:bs}
$\mathrlap{i}\hphantom{j} \longleftarrow \argmin \{a[i] \mid a[i] > v\}$\;
$j \longleftarrow \argmin \{b[j] \mid b[j] > v - d[k]\}$\;
\lIf{$i$ \rm{or} $j$ \rm{does not exist}}{\KwRet{\textsc{no}}}
$(a[i''], b[j'']) \longleftarrow$ the closest precomputed cell below and to the right of $M^{\mathrm{sort}}_k[a[i], b[j]]$\;
\lIf{$M^{\mathrm{sort}}_k[a[i''],b[j'']] = 1$}{\KwRet{\textsc{yes}}}
\ForEach{$i' \in \{i' \mid a[i] \le a[i'] < a[i'']\}$}{
  \lIf{$b[k - i'] \ge b[j]$}{\KwRet{\textsc{yes}}}}
\ForEach{$j' \in \{j' \mid b[j] \le b[j'] < b[j'']\}$}{
  \lIf{$a[k - j'] \ge a[i]$}{\KwRet{\textsc{yes}}}}
\KwRet{\textsc{no}}\;
\end{algorithm}

Finally, let us note that, for a fixed $k$, the value of $c[k]$ has to be among the following $2n$ values: $a[0], a[1], \ldots, a[n-1], b[0] + d[k], b[1] + d[k], \ldots, b[n-1] + d[k]$. A careful binary search only over these values makes the number of iterations logarithmic in $n$, and not in the maximum possible output value. Indeed, after the preprocessing, we already have access to sorted versions of the lists $a[0],a[1],\dotsc,a[n-1]$ and $b[0],b[1],\dotsc,b[n-1]$ and, in particular, to a sorted version of $b[0] + d[k], b[1] + d[k], \ldots, b[n-1] + d[k]$ (since this is just a constant offset of the latter list). We then first binary search for the lowest upper bound of $c[k]$ in $a[0],a[1],\dotsc,a[n-1]$ and then in $b[0] + d[k], b[1] + d[k], \ldots, b[n-1] + d[k]$ in order to determine the exact value of $c[k]$.

The total running time of both phases is $O(p^2 \cdot n \log(n) + n \cdot \log(n) \cdot n/p)$. We set $p = n^{1/3}$ in order to balance the two terms, and this gives the desired $O(n^{5/3} \log n)$ running time.

\section{Parameterizing by the maximum processing time}
In this section we study algorithms for~\problem with running time optimized for $p_{\max}$ and $n$ instead of $P$.
We present an algorithm with a running time of $\widetilde O(n + p_{\max}^3)$.
Such a running time is particularly appealing when $n$ is much larger than $p_{\max}$. This complements Lawler and Moore's algorithm with complexity $O(nP) \le O(n^2 p_{\max})$, which
is fast when $n$ is small.
Interestingly, in both cases we have roughly cubic dependence on $n + p_{\max}$.

Our result is based on a central structural property that we prove for an optimal solution and
a sophisticated dynamic programming algorithm that recovers
solutions of this form. The structural property uses
exchange arguments that are similar to an approach used
by Eisenbrand and Weismantel~\cite{eisenbrand2018proximity}
to prove proximity results in integer programming via
the Steinitz Lemma.

In the following,
a solution is characterized by the subset of jobs
that are finished within their due date. Once such
a subset is selected, jobs in this subset can be scheduled in non-decreasing order sorted by their due date.
In the remainder we assume that $d_1 \le d_2 \le \cdots \le d_n$.
This sorting can be done efficiently:
we may assume without loss of generality that
all due dates are at most $np_{\max}$.
Then radix sort requires only time $O(n \log_n (n p_{\max})) \le O(n + p_{\max})$.
\begin{lemma}\label{lem:pmax-structur}
  There exists an optimal solution $S\subseteq \{1,2,\dotsc,n\}$
  such that for all $i=1,2,\dotsc,n$ it holds that either
  \begin{enumerate}
      \item $|\{1,2,\dotsc,i\} \cap S| < 2 p_{\max}$, or
      \item $|\{i + 1, i + 2,\dotsc, n\} \setminus S| < 2 p_{\max}$.
  \end{enumerate}
\end{lemma}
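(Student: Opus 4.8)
The plan is an exchange argument driven by a potential function. Among all optimal solutions fix one, $S$, that maximizes $\Phi(S) := \sum_{j \in S} p_j \cdot j$ (more generally $\sum_{j\in S} p_j\cdot g(j)$ for any strictly increasing $g$ would work), and suppose for contradiction that $S$ violates the claimed property at some index $i$; that is, $S_1 := S \cap \{1,\dots,i\}$ has $|S_1| \ge 2p_{\max}$ and $S_2 := \{i+1,\dots,n\} \setminus S$ has $|S_2| \ge 2p_{\max}$. The crucial step is to find from these two sets a \emph{balanced swap}: nonempty $A \subseteq S_1$ and $B \subseteq S_2$ with $\sum_{j \in A} p_j = \sum_{j \in B} p_j$. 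I then put $S' := (S \setminus A) \cup B$ and derive a contradiction.

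Granting the balanced swap, verifying $S'$ is routine. Its objective equals that of $S$ because the total processing time removed equals the total added. It is feasible because $S$ is feasible, the due dates are sorted, $A \subseteq \{1,\dots,i\}$, and $B \subseteq \{i+1,\dots,n\}$: for every $m$ the volume $\sum_{j\in A,\, j\le m} p_j$ removed from the prefix $\{1,\dots,m\}$ is at least the volume $\sum_{j\in B,\, j\le m} p_j$ added to it — the former equals $\sum_A p_j = \sum_B p_j$ as soon as $m \ge i$, and the latter is $0$ while $m < i$ — so no prefix sum of $S'$ exceeds the corresponding prefix sum of $S$. Finally, since every index in $B$ strictly exceeds every index in $A$ while $\mu := \sum_A p_j = \sum_B p_j \ge 1$, we get $\Phi(S') - \Phi(S) = \sum_{j \in B} p_j j - \sum_{j \in A} p_j j \ge (i+1)\mu - i\mu = \mu > 0$, contradicting the choice of $S$.

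The step I expect to be the real obstacle — and the one that echoes the Steinitz-lemma proximity technique of Eisenbrand and Weismantel — is producing the balanced swap. Consider the multisets $U := \{p_j : j \in S_1\}$ and $V := \{p_j : j \in S_2\}$, each of size at least $2p_{\max}$ and with all entries in $\{1,\dots,p_{\max}\}$. I would run a signed walk: starting at $s_0 = 0$, at the first step, or whenever the current value is negative, add an unused element of $U$; whenever it is positive, subtract an unused element of $V$; and stop as soon as the value returns to $0$ (necessarily after at least one step, since $s_1 \ne 0$). Every step changes the value by an amount in $\{1,\dots,p_{\max}\}$ in the direction of $0$, so from step $2$ onward the walk stays within $\{-(p_{\max}-1),\dots,p_{\max}-1\}$; and because step $1$ draws from $U$ and step $2$ from $V$, neither multiset can be exhausted within the first $2p_{\max}$ steps. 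If the walk reaches $0$, the elements drawn so far from $U$ and from $V$ are nonempty and have equal sum (nonempty because step $1$ used $U$ and the shared sum is positive), giving $A$ and $B$ directly. Otherwise the values after steps $2,\dots,2p_{\max}$ are $2p_{\max}-1$ nonzero integers taking only $2p_{\max}-2$ possible values, so two of them coincide by the pigeonhole principle; the elements drawn strictly between those two steps then split into nonempty equal-sum subsets of $U$ and $V$. Translating these subsets back to job indices yields the required $A$ and $B$.
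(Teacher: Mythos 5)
Your proof is correct and follows essentially the same exchange-argument strategy as the paper: if the property fails at $i$, both $S\cap\{1,\dots,i\}$ and $\{i+1,\dots,n\}\setminus S$ have at least $2p_{\max}$ elements, and a bounded signed walk plus pigeonhole produces nonempty equal-volume subsets $A$ and $B$; swapping them preserves objective value and feasibility (the feasibility check via prefix sums is exactly right, and works because due dates are sorted). The paper's walk is phrased slightly differently -- it always runs $2p_{\max}$ steps and then applies pigeonhole directly to the partial-sum differences, rather than stopping early at zero -- but the underlying mechanism and the $[-(p_{\max}-1),p_{\max}-1]$ range argument are the same. Where you genuinely add something is the termination of the repeated swapping: the paper only says the new solution is ``closer to satisfying the property'' without specifying a potential, whereas your choice of $\Phi(S)=\sum_{j\in S}p_j\cdot j$ (maximized over optimal solutions) and the clean estimate $\Phi(S')-\Phi(S)\geqslant(i+1)\mu-i\mu=\mu>0$ make the argument airtight in one shot and sidestep any worry about whether the iterated swaps converge. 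This is a small but worthwhile tightening of the paper's exposition rather than a different route.
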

\begin{proof}
Let $S$ be an optimal solution that does not satisfy this property for some $i$.
It is easy to see that if we find some $A \subseteq S \cap \{1,2,\dotsc,i\}$ and $B \subseteq \{i+1,i+2,\dotsc,n\} \setminus S$ with the same volumes (that is, $\sum_{j\in A} p_j = \sum_{j\in B} p_j$), then
$(S \setminus A) \cup B$ would form an optimal solution that is closer to satisfying the property. Note that the solution $(S \setminus A) \cup B$ is feasible since the due dates of the jobs in $B$ are strictly larger than the due dates of the jobs in $A$.

Since both 1.~and 2.~are false for $i$, we have that $A' = S \cap \{1,2,\dotsc,i\}$ and $B' = \{i+1, i+2,\dotsc,n\} \setminus S$ both have cardinality at least $2 p_{\max}$.
We construct $A$ and $B$ algorithmically as follows. Starting with $A_1 = B_1 = \emptyset$
we iterate over $k = 1,2,\dotsc, 2 p_{max}$.
In each iteration we
check whether $\sum_{j\in A_k} p_j - \sum_{j\in B_k} p_j$ is positive or not. If it is positive, we set
$B_{k+1} = B_k \cup \{j\}$ for some $j\in B'\setminus B_k$ and $A_{k+1} = A_k$;
otherwise we set $A_{k+1} = A_k \cup \{j\}$ for some $j \in A'\setminus A_k$.
The difference $\sum_{j\in A_k} p_j - \sum_{j\in B_k} p_j$ is always between $-p_{\max}$
and $p_{\max} - 1$. Hence, by pidgeon-hole principle there are two indices $k < h$ such that
$\sum_{j\in A_k} p_j - \sum_{j\in B_k} p_j = \sum_{j\in A_h} p_j - \sum_{j\in B_h} p_j$.
Since $A_k \subseteq A_h$ and $B_k \subseteq B_h$ by construction, we can simply set
$A = A_h \setminus A_k$ and $B = B_h \setminus B_k$ and it follows that $\sum_{j\in A} p_j = \sum_{j\in B} p_j$.
\end{proof}

\begin{corollary}\label{cor:pmax-structur}
  There exists an optimal solution $S\subseteq \{1,2,\dotsc,n\}$ and an index $i\in\{1,2,\dotsc,n\}$
  such that
  \begin{enumerate}
      \item $|\{1,2,\dotsc,i\} \cap S| \le 2 p_{\max}$, and
      \item $|\{i + 1, i + 2,\dotsc, n\} \setminus S| < 2 p_{\max}$.
  \end{enumerate}
\end{corollary}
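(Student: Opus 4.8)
The plan is to start from the optimal solution $S$ guaranteed by Lemma~\ref{lem:pmax-structur}, which satisfies, for every index $i$, either condition~1 ($|\{1,\dots,i\}\cap S| < 2p_{\max}$) or condition~2 ($|\{i+1,\dots,n\}\setminus S| < 2p_{\max}$). The goal is to locate a \emph{single} index $i$ at which both bounds hold simultaneously (with a slightly relaxed $\le 2p_{\max}$ in the first bound, as the statement allows). The natural candidate is the \emph{threshold} index where the lemma switches from one regime to the other.

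Concretely, I would define $i^\star$ to be the largest index $i$ for which condition~1 holds, i.e.\ $|\{1,\dots,i^\star\}\cap S| < 2p_{\max}$, and argue that $i = i^\star$ works. First, monotonicity: the quantity $|\{1,\dots,i\}\cap S|$ is nondecreasing in $i$, so for $i^\star$ itself we have $|\{1,\dots,i^\star\}\cap S| < 2p_{\max} \le 2p_{\max}$, giving condition~1 of the corollary. Second, for the index $i^\star+1$ (if it exists), condition~1 of the lemma fails by maximality of $i^\star$, so condition~2 of the lemma must hold at $i^\star+1$: $|\{i^\star+2,\dots,n\}\setminus S| < 2p_{\max}$. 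It then remains to pass from the set $\{i^\star+2,\dots,n\}$ back to $\{i^\star+1,\dots,n\}$; these differ only by the single element $i^\star+1$, so the cardinality of the complement grows by at most one, giving $|\{i^\star+1,\dots,n\}\setminus S| \le 2p_{\max}$. A small amount of care is needed here because the corollary asks for the \emph{strict} inequality $< 2p_{\max}$ in its condition~2; I would handle this by noting that if $i^\star+1 \notin S$ then the complement set is unchanged and we directly inherit the strict bound, whereas if $i^\star+1 \in S$ then $|\{1,\dots,i^\star+1\}\cap S| = |\{1,\dots,i^\star\}\cap S| + 1 < 2p_{\max}+1$, i.e.\ $\le 2p_{\max}$, which (combined with the failure of lemma-condition~1 at $i^\star+1$, forcing equality $= 2p_{\max}$, hence the complement already satisfies the bound at $i^\star+1$ itself via lemma-condition~2 applied at a shifted index) can be reconciled; alternatively one simply observes that when $i^\star + 1 \in S$ the element $i^\star+1$ is not removed from the complement, so again the strict bound transfers. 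The remaining edge cases—$i^\star = n$ (all indices satisfy condition~1, so take $i = n$ and the complement of $S$ in the empty tail is trivially empty) and $i^\star$ undefined because even $i = 1$ fails condition~1 (then condition~2 of the lemma holds at $i=0$ in the natural extension, or directly at $i=1$, and we take $i$ accordingly)—I would dispatch in one or two lines each.

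The main obstacle is purely bookkeeping: reconciling the strict versus non-strict inequalities as one shifts the boundary index by one, and treating the boundary cases $i^\star \in \{0, n\}$ cleanly. There is no new combinatorial idea beyond the monotonicity of prefix-intersection cardinalities; the content is entirely front-loaded into Lemma~\ref{lem:pmax-structur}. I would therefore keep the write-up short, emphasizing the definition of $i^\star$ as the switch point and the one-element shift, and being explicit that the optimal solution $S$ is the \emph{same} one produced by the lemma (so optimality is automatic).
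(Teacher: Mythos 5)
Your overall strategy matches the paper's: take the optimal $S$ from Lemma~\ref{lem:pmax-structur}, locate the threshold index where prefix-intersections hit $2p_{\max}$, and read off both bounds there. But there are two issues worth flagging.

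First, a concrete error in your case analysis: you claim that ``if $i^\star+1 \notin S$ then the complement set is unchanged and we directly inherit the strict bound.'' This is backwards. If $i^\star+1 \notin S$, then $i^\star+1 \in \{i^\star+1,\dotsc,n\}\setminus S$, so passing from $\{i^\star+2,\dotsc,n\}\setminus S$ to $\{i^\star+1,\dotsc,n\}\setminus S$ \emph{increases} the cardinality by one, and you would get only $\le 2p_{\max}$, not the strict $< 2p_{\max}$ the corollary demands. The fix you are missing is that this case cannot occur: if $i^\star+1 \notin S$, then $|\{1,\dotsc,i^\star+1\}\cap S| = |\{1,\dotsc,i^\star\}\cap S| < 2p_{\max}$, contradicting the maximality of $i^\star$. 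So $i^\star+1 \in S$ is forced, and only your ``alternatively'' branch survives. Since you present $i^\star+1 \notin S$ as a possible case and get the reasoning for it wrong, the write-up as it stands has a gap.

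Second, the paper sidesteps this case analysis entirely by choosing the \emph{shifted} index $i' = i^\star + 1$ as the index in the corollary, not $i^\star$. Then condition~1 holds with equality $|\{1,\dotsc,i'\}\cap S| = 2p_{\max}$ (this is exactly why the corollary relaxes the lemma's strict $<$ to $\le$), and condition~2 is the lemma's condition~2 applied at $i'$ verbatim, with no set surgery needed. Your choice of $i^\star$ is workable once the $i^\star+1 \in S$ observation is supplied, but the shifted choice is cleaner and explains the $\le$ versus $<$ asymmetry in the corollary's two bounds.
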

\begin{proof}
Consider the solution $S$ as it Lemma~\ref{lem:pmax-structur} and let $i$ be the maximum index
such that $|S \cap \{1,2,\dotsc,i\}| < 2 p_{\max}$.
If $i = n$ the corollary's statement follows. Otherwise, we set $i' = i+1$. Then
$|S \cap \{1,2,\dotsc,i'\}| = 2 p_{\max}$ and by virtue of Lemma~\ref{lem:pmax-structur} it must hold that
$|S \setminus \{i' + 1, i' + 2,\dotsc, n\}| < 2 p_{\max}$.
\end{proof}

Although we do not know the index $i$ from Corollary~\ref{cor:pmax-structur},
we can compute efficiently an index, which is equally good for our purposes.
\begin{lemma}\label{lem:pmax-idx}
In time $O(n)$ we can compute an index $\ell$ such that there exists
an optimal solution $S$ with
\begin{enumerate}
    \item $p(\{1,2,\dotsc,\ell\} \cap S) \le O(p^2_{\max})$, and
    \item $p(\{\ell+1,\ell+2,\dotsc,n\} \setminus S) \le O(p^2_{\max})$,
\end{enumerate}
where $p(X) = \sum_{i \in X} p_i$, for a subset $X \subseteq \{1, \ldots , n \}$.
\end{lemma}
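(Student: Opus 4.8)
The plan is to choose $\ell$ using only the prefix sums of the processing times together with a cheap lower bound on the volume that every feasible solution must leave tardy, and then to transfer the cardinality bounds of Corollary~\ref{cor:pmax-structur} into the volume bounds claimed here. Write $P = p(\{1,\dots,n\})$, let $R$ be the common value $p(\{1,\dots,n\}\setminus S)$ attained by every optimal solution $S$ (equivalently, the minimum total volume of tardy jobs over all feasible solutions), and recall that, since the jobs of a solution are run in due-date order, a set $T\subseteq\{1,\dots,n\}$ is the set of tardy jobs of a feasible solution exactly when $p(T\cap\{1,\dots,k\})\ge\delta_k := p(\{1,\dots,k\})-d_k$ for every $k$. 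Put $L:=\max\{0,\delta_1,\dots,\delta_n\}$; as the prefix sums $p(\{1,\dots,k\})$ are available after sorting by due date, $L$ is computable in $O(n)$ time.

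The first step I would carry out is to sandwich $R$ between $L$ and $L+\pmax$. The bound $R\ge L$ is immediate: any feasible tardy set $T$ satisfies $p(T)\ge p(T\cap\{1,\dots,k\})\ge\delta_k$ for all $k$ and $p(T)\ge 0$. For $R\le L+\pmax$ I would exhibit a feasible tardy set of that volume by a greedy sweep: scan $k=1,\dots,n$, and whenever $p(T\cap\{1,\dots,k\})<\delta_k$ insert jobs of $\{1,\dots,k\}\setminus T$ into $T$ one at a time until $p(T\cap\{1,\dots,k\})\ge\delta_k$ (always possible since $\delta_k\le p(\{1,\dots,k\})$ because $d_k\ge 0$). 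Insertions never decrease any $p(T\cap\{1,\dots,k\})$, so the resulting $T$ is feasible; and if $k^\star$ is the last step at which an insertion occurs, then afterwards $T\subseteq\{1,\dots,k^\star\}$ and, since the last inserted job overshoots $\delta_{k^\star}$ by less than $\pmax$, we get $p(T)=p(T\cap\{1,\dots,k^\star\})<\delta_{k^\star}+\pmax\le L+\pmax$.

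Next I would define $\ell$ as the largest index with $p(\{1,\dots,\ell\})\le L$ (possibly $\ell=0$, i.e.\ the empty prefix), which is clearly computable in $O(n)$ time. Maximality of $\ell$ (or, when $\ell=n$, the trivial $p(\{1,\dots,n\})=P\ge R$) together with $L\ge R-\pmax$ gives $R-2\pmax< p(\{1,\dots,\ell\})\le L\le R$. Now invoke Corollary~\ref{cor:pmax-structur} to fix an optimal solution $S$ and an index $i$ with $|\{1,\dots,i\}\cap S|\le 2\pmax$ and $|\{i+1,\dots,n\}\setminus S|<2\pmax$; since every $p_j\le\pmax$ this yields $p(\{1,\dots,i\}\cap S)\le 2\pmax^2$ and $p(\{i+1,\dots,n\}\setminus S)<2\pmax^2$. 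Splitting $p(\{1,\dots,n\}\setminus S)=R$ at index $i$ gives both $p(\{1,\dots,i\})\le R+2\pmax^2$ (from $p(\{1,\dots,i\}\setminus S)\le R$ and the bound on $p(\{1,\dots,i\}\cap S)$) and $p(\{1,\dots,i\})>R-2\pmax^2$ (from $p(\{i+1,\dots,n\}\setminus S)<2\pmax^2$). Hence $p(\{1,\dots,\ell\})$ and $p(\{1,\dots,i\})$ differ by only $O(\pmax^2)$; equivalently, the jobs lying strictly between $\ell$ and $i$ have total volume $O(\pmax^2)$.

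It then remains to conclude by a short case analysis, using this very $S$ as the witness. If $\ell\le i$, the first claim follows from $p(\{1,\dots,\ell\}\cap S)\le p(\{1,\dots,i\}\cap S)\le 2\pmax^2$, and the second from $p(\{\ell+1,\dots,n\}\setminus S)\le p(\{\ell+1,\dots,i\})+p(\{i+1,\dots,n\}\setminus S)=O(\pmax^2)$; if $\ell> i$, the two roles swap symmetrically, using $p(\{\ell+1,\dots,n\}\setminus S)\le p(\{i+1,\dots,n\}\setminus S)$ and $p(\{1,\dots,\ell\}\cap S)\le p(\{1,\dots,i\}\cap S)+p(\{i+1,\dots,\ell\})$. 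The step I expect to be the crux is the upper bound $R\le L+\pmax$ of the second paragraph; the rest is bookkeeping with prefix sums. The one point that needs care there, and already in the definition of $L$, is equal due dates: when $d_k=d_{k+1}$ the constraint at index $k$ is weaker than the one at $k+1$, so keeping the (smaller) $\delta_k$ in $L$ and in the greedy sweep is harmless for both the lower and the upper bound on $R$.
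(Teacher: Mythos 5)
Your proof is correct, and it takes a genuinely different route from the paper's. The paper defines the latest-start times $t_j=\min_{j'\ge j}(d_{j'}-\sum_{j\le k\le j'}p_k)$, locates the ``critical'' job $h$ (the last index with $t_h<0$), and then picks a window $[k,\ell]$ around $h$ of total volume $\Theta(p_{\max}^2)$, arguing that the index $i$ from Corollary~\ref{cor:pmax-structur} must fall inside it. Your proof instead isolates a clean structural fact: with $\delta_k=p(\{1,\dots,k\})-d_k$ and $L=\max\{0,\delta_1,\dots,\delta_n\}$, the optimal tardy volume $R$ is sandwiched as $L\le R\le L+p_{\max}$ (the lower bound is the easy LP-style bound, the upper bound is your greedy repair sweep). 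You then set $\ell$ to be the prefix whose cumulative volume just reaches $L$ and compare it with $i$ via the identity $p(\{1,\dots,i\})\in(R-2p_{\max}^2,\,R+2p_{\max}^2]$ that follows directly from the cardinality bounds of Corollary~\ref{cor:pmax-structur}. The two choices of $\ell$ need not coincide, but both satisfy the lemma, and what the lemma is used for downstream only needs existence. I checked the details: the feasibility characterization of tardy sets via $\delta_k$ is the prefix-form of the ILP~(\ref{ILP_problem}) constraints; the greedy sweep yields a feasible $T$ with $p(T)<L+p_{\max}$ (including the degenerate case of no insertions, where $T=\emptyset$ and $L=0$); maximality of $\ell$ gives $p(\{1,\dots,\ell\})>R-2p_{\max}$ in all cases including $\ell\in\{0,n\}$; and the final two-case bookkeeping is tight. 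What your approach buys is a self-contained lower/upper bound on the optimum value that may be reusable elsewhere, at the cost of needing the explicit greedy feasibility argument; the paper's $t_j$-based approach avoids constructing a feasible schedule explicitly but makes the choice of $\ell$ look more ad hoc.
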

\begin{proof}
For $j=1,2,\dotsc,n$ let $t_j$ denote the (possibly negative)
maximum time such that we can schedule all jobs $j,j+1,\dotsc,n$
after $t_j$ and before their respective due dates.
It is easy to see that
\begin{equation*}
    t_j = \min_{j' \ge j} \bigg( d_{j'} - \sum_{j \le k \le j'} p_k\bigg) \ .
\end{equation*}
It follows that $t_{j} = \min\{t_{j+1}, d_j\} - p_{j}$ and thus
we can compute all values $t_j$ in time $O(n)$.
Now let $h$ be the biggest index such that $t_h < 0$.
If such an index does not exist, it is trivial to compute the optimal
solution by scheduling all jobs.
Further, let $k < h$ be the biggest index such that
$\sum_{j=k}^h p_j > 2 p_{\max}^2$ or $k = 1$ if no such index exists.
Similarly, let $\ell > h$ be the smallest index such that
$\sum_{j=h}^{\ell} p_j > 4 p_{\max}^2$ or $\ell = n$ if this does not exist.
Let $i$ be the index as in Corollary~\ref{cor:pmax-structur}.
We will now argue that either $k \le i \le \ell$
or $\ell$ satisfies the claim trivially.
This finishes the proof, since 
$p(\{i, i+1, \dotsc, \ell\}) \le p(\{k, k+1, \dotsc, \ell\}) \le O(p_{\max}^2)$ and thus the properties in this lemma, which $i$ satisfies due to Corollary~\ref{cor:pmax-structur}, 
transfer to $\ell$.

As for $k\le i$, we may assume that $k > 1$
and therefore $\sum_{j=k}^{\ell} p_j > 2 p^2_{\max}$.
Notice that there must jobs in $\{k,k+1,\dotsc,n\}$
of total volume more than $2 p_{\max}^2$, which are not in $S$.
This is because $t_k < t_h - 2 p^2_{\max} < - 2 p^2_{\max}$.
On the other hand, from Property~1 of Corollary~\ref{cor:pmax-structur}
it follows that $p(\{i+1,i+2,\dotsc,n\}\setminus S) < 2 p_{\max}^2$.
This implies that $k\le i$.

We will now show that $i \le \ell$ or $\ell$ satisfies the lemma's statement trivially.
Assume w.l.o.g.\ 
that $\ell < n$ and thus $\sum_{j=h}^{\ell} p_j > 4 p^2_{\max}$. Notice that $t_\ell \ge t_h + 4 p_{\max}^2 \ge 2 p_{\max}^2 + p_{\max}$. 
If $S \cap \{1,2,\dotsc,\ell\}$ contains jobs of a total processing time more than $2 p_{\max}^2$,
then $i\le \ell$ follows directly because of Corollary~\ref{cor:pmax-structur}. Conversely, if the total
processing time is at most $2 p_{\max}^2$, then we can schedule all these jobs and also all jobs in $\{\ell + 1, \ell + 2, \dotsc, n\}$ (since $t_{\ell} \ge 2 p_{\max}^2$), which must
be optimal.
Hence, $\ell$ satisfies the properties of the lemma.
\end{proof}
An immediate consequence of the previous lemma is that
we can estimate the optimum up to an additive error of $O(p_{\max}^2)$.
We use this to perform a binary search with $O(\log(p_{\max}))$
iterations. In each iteration we need to check if there
is a solution of at least a given value $v$. For this it suffices
to devise an algorithm that decides whether there exists a solution
that runs a subset of jobs without any idle time (between $0$ and $d_n$)
or not. To reduce to this problem, we add
dummy jobs with due date $d_n$ and total processing time $d_n - v$;
more precisely, $\min \{p_{\max}, d_n - v\}$ many jobs with processing time $1$ and $\max\{0, \lceil (d_n - v - p_{\max}) / p_{\max} \rceil\}$
many jobs with processing time $p_{\max}$.
Using that w.l.o.g.\ $d_n \le n p_{\max}$, we can bound the total
number of added jobs by $O(n + p_{\max})$, which is insignificant
for our target running time.
If there exists a schedule without any idle time, we remove the dummy jobs and obtain a schedule where jobs with total processing time at least $d_n - (d_n - v) = v$ finish within their due dates. If on the other hand there is a schedule for a total processing time $v' \ge v$, we can add dummy jobs of size exactly $d_n - v'$ to arrive at
a schedule without idle time.
For the remainder of the section we consider this decision problem.
We will create two data structures that
allow us to efficiently query information about the two partial solutions from Lemma~\ref{lem:pmax-idx}, that is, the solution for jobs $1,2,\dotsc,\ell$ and that for $\ell+1,\ell+2,\dotsc,n$.

\begin{lemma}\label{lem:pmax-dyn1}
  In time $O(n + p_{\max}^3 \log(p_{\max}))$ we can compute
  for each $T = 1,2,\dotsc, O(p_{\max}^2)$ whether there is a subset
  of $\{1,2,\dotsc,\ell\}$ which can be run exactly during $[0, T]$.
\end{lemma}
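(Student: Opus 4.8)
The plan is to reduce the instance to only $\mathrm{poly}(\pmax)$ relevant jobs and then solve a subset-sum-style dynamic program over the window $\{0,1,\dots,O(\pmax^2)\}$ of interesting total processing times. Throughout, write $D = O(\pmax^2)$ for the bound from Lemma~\ref{lem:pmax-idx} and assume the EDD order $d_1 \le d_2 \le \dots \le d_\ell$.

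The first step is a local exchange argument that bounds how many jobs of each processing time we must keep. Suppose a feasible subset $S' \subseteq \{1,\dots,\ell\}$ contains a job $a$ of processing time $q$ but misses a job $b$ of the same processing time with $d_b \ge d_a$. Then $(S'\setminus\{a\})\cup\{b\}$ is again feasible with the same total processing time: positions of the EDD order strictly between $a$ and $b$ only lose mass, and positions at or after $b$ see an unchanged prefix sum. Iterating, we may assume that for every $q$ the jobs of $S'$ of processing time $q$ are precisely the $k_q$ ones with the largest due dates. Since we only care about totals $T \le D$ and such an $S'$ satisfies $q k_q \le D$, only the $\min(m_q, \lfloor D/q\rfloor)$ largest-due-date jobs of each processing time $q$ are ever needed; over all $q$ this leaves $O\!\big(\sum_q D/q\big) = O(\pmax^2\log\pmax)$ relevant jobs, which can be extracted in $O(n + \mathrm{poly}(\pmax))$ time by bucketing jobs by processing time and applying linear-time selection inside each bucket. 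Moreover, any kept job with due date exceeding $D$ never constrains a schedule of length at most $D$, so all such jobs are set aside and reinserted at the very end as $O(\pmax)$ bounded-knapsack item types (one per processing time, with multiplicity capped at $\lfloor D/q\rfloor$), each processed by a single sliding-window-OR pass in $O(D)$ time.

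On the remaining relevant jobs (those with due date at most $D$) I would run a dynamic program in EDD order that maintains the set $R \subseteq \{0,1,\dots,D\}$ of total processing times attainable by a feasible subset of the relevant jobs seen so far. Processing a relevant job of processing time $q$ and due date $d$ performs the update $R \leftarrow R \cup \big((R+q)\cap\{0,\dots,\min(d,D)\}\big)$. This is correct because, the job being the last in EDD order among those considered, appending it to a feasible $S'$ preserves feasibility exactly when $p(S')+q\le d$, all earlier completion times being untouched; so by induction $R$ equals the set of attainable feasible totals, and intersecting it with $\{1,\dots,O(\pmax^2)\}$ yields the statement. (Equivalently, a selection $(k_q)_q$ is feasible iff $\sum_q q\,\phi_q(\tau)\le\tau$ for all $\tau$, where $\phi_q(\tau)$ counts the selected jobs of processing time $q$ with due date $\le\tau$; this reformulation is the lens through which the updates can be batched.)

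The main obstacle is running these updates within the budget: done naively, one update costs $\Theta(D)$ and there are $\Theta(\pmax^2\log\pmax)$ of them, a factor $\pmax$ too much. The remedy is to batch: handle a whole run of jobs of the same processing time $q$ as a single bounded-knapsack-with-deadlines step by splitting $R$ into residue classes modulo $q$, where the run acts as a capped sequence of shift-by-one-OR operations whose $i$-th truncation uses the $i$-th due date of the run; precomputing the prefix minima $g(t)=\min_{j\le t}(d^{(j)}-jq)$ turns each reachable position of a residue class into an interval of positions it makes reachable (its length obtained by a binary search in $g$), and all such intervals are merged by one linear sweep, so a run of length $L$ costs $\widetilde O(D + |R|)$ instead of $\widetilde O(LD)$. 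Pushing the total down to $\widetilde O(\pmax^3)$ then requires organising the relevant jobs so that this cost is paid few enough times — grouping them by processing time while certifying feasibility through the $\sum_q q\,\phi_q(\tau)\le\tau$ condition, and charging residual work against the fact that each element of $\{0,\dots,D\}$ enters $R$ only once — and this accounting is the part I expect to be genuinely delicate; it is also where the extra $\log\pmax$ factor of the claimed running time originates.
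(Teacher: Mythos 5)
Your proposal takes a genuinely different route from the paper, and the part you flag as ``genuinely delicate'' is indeed where it breaks down; as written, it does not reach the claimed running time.

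The preliminary steps are sound: the exchange argument (among equal processing times, prefer the job with the later due date), the harmonic-sum filtering to $O(p_{\max}^2\log p_{\max})$ candidate jobs, and the correctness of the EDD-order update $R \leftarrow R \cup \bigl((R+q)\cap\{0,\dots,\min(d,D)\}\bigr)$ are all fine. The problem is the work bound. With $\Theta(p_{\max}^2\log p_{\max})$ surviving jobs and $\Theta(D)=\Theta(p_{\max}^2)$ cost per update, the DP is $\widetilde O(p_{\max}^4)$, and neither of your two remedies closes the factor-$p_{\max}$ gap. Batching a \emph{run} of consecutive equal-$q$ jobs into one sliding-window pass is correct but useless in the worst case, since in EDD order the processing times can alternate, giving $\Theta(p_{\max}^2\log p_{\max})$ runs of length one and hence the same $\widetilde O(p_{\max}^4)$ total. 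Regrouping \emph{all} jobs of a given processing time into one macro-step would give the desired $O(p_{\max})$ macro-steps, but it destroys the invariant on which the DP's correctness rests: once you process all $q=1$ jobs and then start on $q=2$, the coupled constraint $\phi_1(\tau)+2\phi_2(\tau)\le\tau$ depends on the entire profile $\tau \mapsto \phi_1(\tau)$ of the first group, not merely on the scalar total that $R$ records, so tracking $R\subseteq\{0,\dots,D\}$ no longer characterizes feasibility. The ``each element enters $R$ only once'' amortization also does not save you, because the interval-merge sweep in each macro-step must at least read $R$, and $|R|$ can be $\Theta(D)$ throughout; that charges $\widetilde O(D)$ per macro-step regardless of how many \emph{new} elements appear.

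The paper sidesteps this entirely by inverting the roles of state and time: it runs a dynamic program over the target $T=0,1,\dots,O(p_{\max}^2)$ and stores, for each $T$, the \emph{smallest index} $i_T$ such that some subset of $\{1,\dots,i_T\}$ fills $[0,T]$ without idle time. The recurrence exploits that in an idle-free schedule the job with the largest index (hence the latest due date) runs last, during $[T-p_{i_T},T]$, so $i_T$ is obtained from $i_{T-k}$ by trying all $k\in\{1,\dots,p_{\max}\}$ and taking the smallest index $i>i_{T-k}$ with $p_i=k$ and $d_i\ge T$. Each such candidate is found by a binary search among the $O(p_{\max}^2)$ lowest-due-date jobs of processing time $k$ (enough since a window of length $T\le O(p_{\max}^2)$ contains at most that many jobs and the minimal index solution uses them), which gives $O(p_{\max}\log p_{\max})$ per value of $T$ and $O(p_{\max}^3\log p_{\max})$ overall. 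The answer to the lemma is then simply whether $i_T\le\ell$. If you want to salvage your own approach you would need a way to avoid iterating over jobs altogether, and iterating over $T$ with a compressed index state is precisely what that takes.
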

\begin{proof}
We will efficiently compute for each $T = 0,1,\dotsc,O(p_{\max}^2)$
the smallest index $i_T$
such that there exists some $A_T\subseteq \{1,2,\dotsc,i_T\}$, which runs on the machine exactly for the time interval $[0,T]$. Then it only remains
to check if $i_T \le \ell$ for each $T$.

Consider $i_T$ for some fixed $T$. Then $i_T\in A_T$, since otherwise
$i_T$ would not be minimal. Further, we can assume that job $i_T$ is processed exactly during $[T - p_{i_T}, T]$, since $i_T$
has the highest due date among jobs in $A_T$. It follows that $i_{T'} < i_T$ for $T' = T - p_i$.
Using this idea we can calculate each $i_T$ using dynamic programming. Assume that we have already computed $i_{T'}$ for
all $T' < T$. Then to compute $i_T$ we iterate over all processing times $k=1,2,\dotsc,p_{\max}$.
We find the smallest index $i$ such that $p_i = k$, $i_{T - k} < i$, and $d_i \ge T$. Among the indices we get for each $k$ we
set $i_T$ as the smallest, that is,
\[i_T = \min_{k \in [p_{\max}]} \min \, \{ i \mid p_i = k, i > i_{T-k}, d_i \ge T\} . \]
The inner minimum can be computed using a binary
search over the $O(p_{\max}^2)$ jobs with $p_i = k$ and
smallest due date, since the values of $i_T$ do not
change when restricting to the $O(p_{\max}^2) \ge T$ jobs with
smallest due date.
It follows that
computing $i_T$ can be done in $O(p_{\max} \cdot \log(p_{\max}))$.
Computing $i_T$ for all $T$ requires
time $O(p_{\max}^3 \cdot \log(p_{\max}))$.
\end{proof}

\begin{lemma}\label{lem:pmax-dyn2}
  In time $O(n + p_{\max}^3 \log(p_{\max}))$ we can compute
  for each $T = 1,2,\dotsc, O(p_{\max}^2)$ whether there is some
  $A\subseteq\{\ell+1,\ell+2,\dotsc,n\}$,
  such that $\{\ell+1,\ell+2,\dotsc,n\} \setminus A$
  can be run exactly during $[d_n + T - \sum_{j=\ell+1}^n p_j, d_n]$.
\end{lemma}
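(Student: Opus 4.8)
The plan is to prove Lemma~\ref{lem:pmax-dyn2} as the mirror image of Lemma~\ref{lem:pmax-dyn1}, obtained by reversing the time axis about $d_n$ and processing the suffix jobs in the order $n, n-1, \dotsc, \ell+1$. Fix a candidate set $A \subseteq \{\ell+1,\dotsc,n\}$ of discarded jobs, and note first that the kept jobs $\{\ell+1,\dotsc,n\}\setminus A$, run in EDD (i.e., index) order so that the last of them finishes at $d_n$, occupy exactly the block $[d_n + p(A) - \sum_{j>\ell}p_j,\ d_n]$; hence the lemma's question for a given $T$ is precisely whether there is a feasible $A$ of volume exactly $T$. Running the kept jobs this way, kept job $j$ completes at $d_n - \sum_{i \notin A,\ i>j} p_i$, so the requirement that it meets its due date becomes $\sum_{i \notin A,\ i>j} p_i \ge d_n - d_j$ --- a ``release-time'' version of the deadline constraint handled in Lemma~\ref{lem:pmax-dyn1}.

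Mirroring the quantity $i_T$, I would define $j_T$ to be the largest index $j \in \{\ell+1,\dotsc,n+1\}$ such that on the index range $\{j, j+1, \dotsc, n\}$ there is a feasible configuration --- a keep/discard choice for each of these jobs with every kept job meeting the completion-time bound above --- whose discarded volume equals exactly $T$. For the recurrence, observe that the index attaining $j_T$ must have job $j_T$ itself discarded (otherwise deleting $j_T$ from the configuration gives a feasible configuration on $\{j_T+1,\dotsc,n\}$ with the same discarded volume, contradicting maximality). Writing $p_{j_T}=k$, the restriction of the configuration to $\{j_T+1,\dotsc,n\}$ is feasible with discarded volume $T-k$, so $j_T < j_{T-k}$, and additionally the whole stretch of jobs $\{j_T+1,\dotsc,j_{T-k}-1\}$ must be keepable on top of a discarded volume of $T-k$. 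This gives a recurrence for $j_T$ in terms of $j_{T-1},\dotsc,j_{T-\pmax}$. To answer the lemma's query for a given $T$ one then checks whether the configuration witnessing $j_T$ extends down to index $\ell+1$: the jobs $\ell+1,\dotsc,j_T-1$ must all be kept, and since the discarded volume below index $j_T$ is pinned to $T$, their completion times are determined, so this is a single range condition.

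For the implementation I would precompute, in $O(n)$ time: the sorted-by-due-date order (available via radix sort, as noted before Lemma~\ref{lem:pmax-structur}), the prefix sums $\sum_{i>j}p_i$, the slack values $\beta_i := d_i - d_n + \sum_{k>i}p_k$ --- so that $\beta_i \ge r$ is exactly the condition that job $i$ can be kept when the discarded volume among $\{i+1,\dotsc,n\}$ is $r$ --- and a range-minimum structure over the $\beta_i$. As in Lemma~\ref{lem:pmax-dyn1}, in the recurrence only $O(\pmax^2)$ jobs of each processing time need to be examined explicitly (a discarded set of total volume $O(\pmax^2)$ contains at most $O(\pmax^2)$ jobs of any fixed size, and among equal-size candidates one is always at least as useful as another), while the feasibility of the long kept stretches in between is read off from the range-minimum structure. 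Each of the $O(\pmax^2)$ values $j_T$ is then computed from $j_{T-1},\dotsc,j_{T-\pmax}$ by, for each of the $\pmax$ sizes $k$, a binary search --- over the equal-size jobs, and, using the range-minimum structure, over the lowest index down to which the intervening stretch can still be kept at discard-level $T-k$ --- costing $O(\log\pmax)$. Altogether this yields the claimed $O(n + \pmax^3\log\pmax)$ running time.

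The main obstacle, and the genuine deviation from Lemma~\ref{lem:pmax-dyn1}, is that here almost every suffix job is \emph{kept} --- hence actually scheduled --- so every ``keep'' decision carries a due-date constraint, whereas in Lemma~\ref{lem:pmax-dyn1} the prefix jobs that are not selected are simply dropped and constrain nothing. This is why the $\beta_i$ bookkeeping is needed and why the recurrence is not a literal reversal of the one in Lemma~\ref{lem:pmax-dyn1}. The delicate point I would need to prove carefully is that the single index $j_T$ is a sufficient statistic: whenever some feasible configuration of discarded volume $T$ exists on all of $\{\ell+1,\dotsc,n\}$, it must be recoverable by the recurrence together with the extension test. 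This should follow from the facts that restricting a feasible configuration to an index suffix leaves it feasible (dropping later jobs only loosens the completion-time bounds) and that the relevant range-minima of the $\beta_i$ are monotone in the index range, but the corner cases --- in particular configurations in which the extension test fails for $j_T$ but a configuration with a lower-index discard succeeds --- would need to be ruled out; everything else is routine bookkeeping analogous to the proof of Lemma~\ref{lem:pmax-dyn1}.
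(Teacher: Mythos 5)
Your proposal matches the paper's proof essentially line for line: the same DP state (your $j_T$ is the paper's $i_T$, the largest index of a suffix on which a feasible discard-set of volume $T$ exists), the same recurrence via the minimum-index discarded job and its processing time, the same as-late-as-possible schedule giving slack/earliness values $\beta_i$, the same constant-time range-minimum data structure (the paper uses a Cartesian tree), and the same binary search within each processing-time class. One small point in your favor: you explicitly spell out that when chaining $j_T$ to $j_{T-k}$ one must also verify that the intervening jobs $\{j_T+1,\dotsc,j_{T-k}-1\}$ are keepable at discard-level $T-k$, whereas the paper's write-up only mentions the earliness check on $\{\ell+1,\dotsc,j-1\}$ and leaves the intermediate check implicit. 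The ``sufficient statistic'' worry you flag at the end --- that a feasible configuration might exist whose first discard is at a strictly smaller index than $j_T$, while the extension test from $j_T$ fails because some job in $\{\ell+1,\dotsc,j_T-1\}$ has $\beta_i < T$ --- is a genuine subtlety that the paper does not address any more carefully than you do; it is the one place where the paper's proof is also informal, so you should not expect to find the resolution by mirroring the paper more closely, but rather argue it directly (e.g., by defining the DP value to be the maximum first-discard index of a configuration that is feasible on the \emph{whole} suffix $\{\ell+1,\dotsc,n\}$ rather than merely on $\{j,\dotsc,n\}$, and adapting the recurrence accordingly).
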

\begin{proof}
For every $T = 1,\dotsc,O(p_{\max}^2)$ determine $i_T$,
which is the largest value such that there exists a
set $A_T \subseteq \{i_T,i_T + 1,\dotsc,n\}$ with
$\sum_{j\in A_T} p_j = T$, such that
$\{i+1,i+2,\dotsc,n\} \setminus A_T$
can be run exactly during the interval $[d_n + T - \sum_{j=i+1}^n p_j, d_n]$.
Consider one such $i_T$ and corresponding $A_T$.
Let $j\in A_T$ has the minimal due date. Then for $T' = T - p_j$
we have $i_{T'} > i_T$.
To compute $i_T$ we can proceed as follows. Guess the processing
time $k$ of the job in $A_T$ with the smallest due date.
Then find the maximum $j$ with $p_j = k$ and $j < i_{T - k}$.
Finally, verify that there is indeed a schedule.
For this consider the schedule of all jobs
$\{\ell+1,\ell+2,\dotsc,n\}$, where we run them as late as possible
(a schedule that is not idle in $[d_n - \sum_{i=\ell+1}^n p_i, d_n]$.
Here, we look at the ``earliness'' of each job in $\{\ell+1,\dotsc,j-1\}$.
This needs to be at least $T$ for each of the jobs.

We can check the earliness efficiently by precomputing the mentioned schedule and building a Cartesian tree with the earliness values.
This data structure can be built in $O(n)$ and allows us to query
for the minimum value of an interval in constant time.
\end{proof}
We can now combine Lemmas~\ref{lem:pmax-idx}-\ref{lem:pmax-dyn2} to
conclude the algorithm's description.
Suppose that $S$ is a set of jobs corresponding to a solution,
where the machine is busy for all of $[0, d_n]$.
By Lemma~\ref{lem:pmax-idx} there exist some $T, T' \le O(p_{\max}^2)$
such that $p(\{1,2,\dotsc,\ell\} \cap S) = T$
and $p(\{\ell+1,\ell+2,\dotsc,n\} \setminus S) = T'$.
In particular, the machine will be busy with jobs of $\{1,2,\dotsc,\ell\} \cap S$ during $[0, T]$ and
with jobs of $\{\ell + 1, \ell + 2,\dotsc,n\} \cap S$ during $[T, d_n]$.
The choice of $T$ and $T'$ implies that
\begin{equation*}
    d_n = p(S) = p(\{1,\dotsc,\ell\} \cap S) + p(\{\ell + 1,\dotsc,n\} \cap S) = T + p(\{\ell + 1, \dotsc, n\}) - T' .
\end{equation*}
Hence, $T = d_n + T' - p(\{\ell + 1, \dotsc, n\})$. In order to find
a schedule where the machine is busy between $[0, d_n]$ we proceed
as follows.
We iterate over every potential $T = 0,1,\dotsc,O(p_{\max}^2)$.
Then we check using Lemma~\ref{lem:pmax-dyn1}
whether there exists a schedule of jobs in $\{1,2,\dotsc,\ell\}$
where the machine is busy during $[0, T]$. For the correct choice
of $T$ this is satisfied. Finally, using Lemma~\ref{lem:pmax-dyn2}
we check whether there is a subset of jobs in $\{\ell+1,\ell+2, \dotsc, n\}$ that can be run during $[T, d_n] = [d_n + T' - p(\{\ell + 1, \dotsc, n\}), d_n]$.
The preprocessing of Lemmas~\ref{lem:pmax-idx}-\ref{lem:pmax-dyn2}
requires time $O(n + p^3_{\max} \log(p_{\max}))$. Then
determining the solution requires only $O(p_{\max}^2)$,
which is dominated by the preprocessing. Finally,
notice that we lose another factor of $\log(p_{\max})$ due
to the binary search.

\section{Multiple machines}
In this section we present an algorithm that
solves~\multiplemachines in $n \cdot p^{O(m)}_{\max}$ time.
We assume that jobs are ordered non-decreasingly by due dates (using radix sort as in the previous section), and that all considered schedules use earliest-due-date-first on each machine.
We will now prove a structural lemma that enables our result.
\begin{lemma}\label{lem:Pm}
  There is an optimal schedule such that
  for every job $j$ and every pair of machines $i, i'$ we have
  \begin{equation}\label{eq:machines-balance}
      |\ell_i(j) - \ell_{i'}(j)| \le O(p_{\max}^2) ,
  \end{equation}
  where $\ell_i(j)$ denotes the total volume of jobs $1,2,\dotsc,j$ scheduled on machine $i$.
  Furthermore, the schedule satisfies, for every $j$ and $i$, that
  \begin{equation}\label{eq:machines-lb}
      \ell_i(j) \le \min_{j' > j}\left\{ d_{j'} - \frac{1}{m} \sum_{j''=j+1}^{j'} p_{j''} \right\} + O(p_{\max}^2) .
  \end{equation}
\end{lemma}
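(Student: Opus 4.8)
The plan is to establish the balancing inequality~\eqref{eq:machines-balance} by an exchange argument in the spirit of Lemma~\ref{lem:pmax-structur}, and then to deduce~\eqref{eq:machines-lb} from~\eqref{eq:machines-balance} by an averaging argument over the machines.

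For~\eqref{eq:machines-balance} I would fix an optimal set $S$ of jobs to be completed on time and, among all feasible assignments of $S$ to the $m$ machines (with earliest-due-date-first order on each machine), pick one minimizing the convex potential $\Phi=\sum_{j=1}^{n}\sum_{i=1}^{m}\ell_i(j)^2$, which is small precisely when the prefix loads are balanced across machines at every job. Suppose, towards a contradiction, that~\eqref{eq:machines-balance} fails, and let $j$ be the \emph{smallest} index for which some machines $i$ (heavier) and $i'$ (lighter) satisfy $\ell_i(j)-\ell_{i'}(j)>c\,p_{\max}^2$ for the intended constant $c$. Applying a pigeonhole argument as in Lemma~\ref{lem:pmax-structur} to the jobs of $\{1,\dots,j\}$ that lie on $i$ or on $i'$, I would extract small disjoint sets $A$ (currently on $i$) and $B$ (currently on $i'$), both of indices at most $j$, with $0<p(A)-p(B)\le p_{\max}$ and such that moving $A$ onto $i'$ and $B$ onto $i$ changes every prefix load by at most $O(p_{\max})$. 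Provided this swap keeps the assignment feasible (discussed below), it leaves $\Phi$ unchanged on all but the heavily imbalanced indices, where the imbalance between $i$ and $i'$ shrinks by $2\bigl(p(A)-p(B)\bigr)\in(0,2p_{\max}]$; a short computation shows that for $c$ large enough this strictly decreases $\Phi$, contradicting its minimality. Since each such step is integral and lowers $\Phi$, iterating terminates with an optimal assignment satisfying~\eqref{eq:machines-balance}.

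I expect the main obstacle to be verifying that the swap preserves feasibility: placing $A$ on machine $i'$ raises all prefix loads of $i'$ from the index of the smallest job in $A$ onwards, so in principle some downstream job of $i'$ could now miss its due date. Controlling this should use two facts: the minimality of $j$, which bounds \emph{all} pairwise load differences by $c\,p_{\max}^2$ at every index below $j$, and the $\Omega(p_{\max}^2)$ of slack that machine $i'$ has relative to $d_j$ by virtue of the assumed imbalance; it may additionally be necessary to restrict $A$ and $B$ to a window of indices just below $j$, so that the swap only disturbs jobs whose due dates are within $O(p_{\max}^2)$ of $d_j$. Making these estimates fit together, and fixing the constant $c$ accordingly, is the technical core of the proof.

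Finally, for~\eqref{eq:machines-lb} I would first reduce, by a reduction analogous to the one in the previous section (adding dummy jobs with due date $d_n$), to the problem of deciding whether \emph{all} $n$ jobs can be scheduled within their due dates; then feasibility gives $\sum_{i=1}^{m}\ell_i(j')\le m\,d_{j'}$ for every $j'>j$, while $\sum_{i=1}^{m}\ell_i(j')=\sum_{j''\le j'}p_{j''}$ because every job is scheduled. Subtracting $\sum_{j''\le j}p_{j''}=\sum_{i=1}^{m}\ell_i(j)$ and replacing the resulting average $\frac1m\sum_{i'}\ell_{i'}(j)$ by $\ell_i(j)$ at an additive cost of $O(p_{\max}^2)$ via~\eqref{eq:machines-balance}, one obtains $\ell_i(j)\le d_{j'}-\frac1m\sum_{j''=j+1}^{j'}p_{j''}+O(p_{\max}^2)$; taking the minimum over $j'>j$ proves~\eqref{eq:machines-lb}.
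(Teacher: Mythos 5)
Your high-level strategy — an exchange argument in the spirit of Lemma~\ref{lem:pmax-structur} — matches the paper's approach, but there are two concrete gaps that I think would cause the proof to fail as written.

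For~\eqref{eq:machines-balance}, the swap you propose exchanges sets $A$ (on the heavy machine $i$) and $B$ (on the light machine $i'$), both contained in the prefix $\{1,\dots,j\}$, with $0 < p(A) - p(B) \le p_{\max}$. Because $p(A) \ne p(B)$, this changes the \emph{total} load of $i$ and $i'$, hence shifts every downstream job on $i'$ later in time, so some job on $i'$ may now miss its due date; you flag this yourself as ``the main obstacle'' but do not resolve it, and I do not think the minimality of $j$ alone resolves it, since the problem occurs downstream, at indices above $j$. The paper's swap is structured differently precisely to avoid this: it pairs jobs on the \emph{light} machine with due dates strictly greater than $d_j$ with jobs on the \emph{heavy} machine with due dates at most $d_j$, both occupying the same time window on their respective machines, and the pigeonhole argument is used to make the exchanged volumes \emph{exactly} equal so that total machine loads are preserved. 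Swapping across the boundary $d_j$ with equal volume is what makes feasibility go through, and it is also what decreases $\ell_i(j)$ while increasing $\ell_{i'}(j)$ simultaneously. The convex potential $\Phi$ is a plausible way to organize termination (the paper instead sweeps $j$ from low to high with ``buffer space''), but you would still need to re-design the swap itself along the paper's lines before the potential argument can close.

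For~\eqref{eq:machines-lb}, your reduction to the case where all $n$ jobs are scheduled does not prove the lemma as stated. The lemma is about an \emph{arbitrary} optimal schedule of the original instance, and the multi-machine section of the paper does not perform the ``add dummy jobs so the machines are always busy'' reduction that the single-machine section uses (the multi-machine DP directly tracks load patterns and the best volume achieving them). If one adds dummy jobs, the lemma would have to be restated for the modified instance and the algorithm redesigned accordingly; moreover, for multiple machines the appropriate amount of dummy volume is not known in advance and the single-machine binary-search trick does not transfer directly. The paper instead argues as follows, staying with the general optimal solution: if $\ell_i(j)$ exceeds the right-hand side by $\Omega(p_{\max}^2)$, then by a volume count there must be $\Omega(p_{\max}^2)$ worth of \emph{unscheduled} jobs with due dates above $d_j$, and one can then extract equal-volume subsets (again by pigeonhole) to swap unscheduled late-due-date jobs in for scheduled early-due-date jobs on machine $i$, alternating these swaps with re-balancing steps until both properties hold. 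Your averaging computation, $\frac{1}{m}\sum_i \ell_i(j) \le d_{j'} - \frac{1}{m}\sum_{j''=j+1}^{j'} p_{j''}$ followed by~\eqref{eq:machines-balance}, is correct \emph{if} all jobs are scheduled, and it is a nice observation, but it does not cover the case that some jobs $j'' \le j'$ are tardy, which is exactly where property~\eqref{eq:machines-lb} is non-trivial.
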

\begin{proof}
  The proof relies on swapping arguments.
  As a first step we make sure that between $\ell_1(n),\dotsc,\ell_m(n)$
  the imbalance is at most $p_{\max}$. If it is bigger, we simply move the
  last job from the higher loaded machine to the lower loaded machine.
  
  Now we augment this solution to obtain that, for every two machines $i$, $i'$ and every~$j$, we satisfy~\eqref{eq:machines-balance}.
  Let $j$, $i$, and $i'$ be such that $\ell_i(j) > \ell_{i'}(j) + 3 p_{\max}^2$.
  This implies that on $i'$ there is a set of jobs $A$
  with due dates greater than $d_j$ that is processed between $\ell_{i'}(d_j)$
  and at least $\ell_i(d_j) - p_{\max}$ (the latter because of the balance of total machine loads). Thus, $p(A) \ge 3 p^2_{\max} - p_{\max}$ and consequently $|A| \ge 2 p_{\max}$.
  On $i$ let $B$ be the set of jobs with due dates at most 
  $d_j$ that are fully processed
  between $\ell_{i'}(d_j)$ and $\ell_i(d_j)$. Then also
  $p(B) \ge 3 p^2_{\max} - p_{\max}$ and $|B| \ge 2 p_{\max}$.
  By the same arguments
  as in Lemma~\ref{lem:pmax-structur}, there are non-empty
  subsets $A'\subseteq A$ and $B' \subseteq B$ with equal processing time.
  We swap $A'$ and $B'$, which improves the balance between $\ell_i(j)$
  and $\ell_{i'}(j)$.
  We now need to carefully apply these swaps so that after a finite number of them we have no large imbalances anymore.
  We start with low values of $j$, balance all $\ell_i(j)$, and then increase $j$.
  However, it might be that balancing
  $\ell_i(j)$ and $\ell_{i'}(j)$ affects the values $\ell_i(j')$ for $j' < j$. To avoid this, we will use some buffer space. Notice that because
  \begin{equation*}
      \sum_i \ell_i(j) = \sum_{j' : \text{$j'$ is scheduled and } j' \le j} p_{j'} ,
  \end{equation*}
  it follows that a swap does not change the average $\ell_i(j)$ over all $i$.
  If we already balanced for all $j' \le j$,
  then we will skip some values $j'' \ge j$
  and only establish (\ref{eq:machines-balance})
  again for the first $j''$ such that
  \begin{equation*}
      \frac 1 m \sum_i \ell_i(j'') > \frac 1 m \sum_i[\ell_i(j)] + 6 p^2_{\max} .
  \end{equation*}
  It is not hard to see that the balance for job prefixes between $j$ and $j''$ then also holds (with a slightly worse constant).
  To balance the values for $j''$, a careful look at the swapping procedure reveals that it suffices to move jobs, which are scheduled after
  $\frac 1 m \sum_i[\ell_i(j'')] - 3 p^2_{\max}$.
  Such jobs cannot have a due date lower than $d_j$,
  since $\max_i \ell_i(j) \le \frac 1 m \sum_i [\ell_i(j)] + 3 p^2_{\max} \le \frac 1 m \sum_i \ell_i(j'') - 3 p^2_{\max}$. Hence, the swaps do not affect the values $\ell_i(j')$ for $j' \le j$.
  Continuing these swaps, we can establish~\eqref{eq:machines-balance}.
  
  Let us now consider~\eqref{eq:machines-lb}. Suppose that for some job $j$ and machine $i$ we have
  \begin{equation*}
        \ell_i(j) > \min_{j' > j}\left\{ d_{j'} - \frac{1}{m} \sum_{j''=j+1}^{j'} p_{j''} \right\} + \Omega(p_{\max}^2) \ .
  \end{equation*}
  With a sufficiently large hidden constant (compared to~\eqref{eq:machines-balance}) it follows that
  there is a volume of at least $3p_{\max}^2$ of jobs $j'$ with $d_{j'} > d_j$ that are not scheduled
  by this optimal solution. This follows simply from 
  considering the available space on the machines.
  Also with a sufficiently large constant it holds that $\ell_i(j) > 3 p^2_{\max}$.
  In the same way as earlier in the proof, we can find two non-empty sets of jobs $A'$ and $B'$ where
  $A'$ consists only of jobs $j'$ with $d_{j'} > d_j$ that are not scheduled and $B'$ consists only of
  jobs $j'$ with $d_{j'} \le d_j$, which are scheduled on machine $i$. We can swap these two sets and
  retain an optimal solution. Indeed, this may lead to a new violation of~\eqref{eq:machines-balance}.
  In an alternating manner we establish~\eqref{eq:machines-balance} and then perform a swap for~\eqref{eq:machines-lb}.
  Since every time the latter swap is performed the average due dates of the scheduled jobs increases, the process
  must terminate eventually.
\end{proof}
Having this structural property, we solve the problem by dynamic programming:
for every $j = 1,2,\dotsc,n$ and every potential machine-load
pattern $(\ell_1(j),\ell_2(j),\dotsc,\ell_m(j))$ we store the highest volume of jobs in $1,2,\dotsc,j$
that achieves these machine loads or lower machine loads on all machines.
By Lemma~\ref{lem:Pm}, we may restrict ourselves to patterns where machine loads differ pairwise by only $O(p_{\max}^2)$,
but this is not sufficient to obtain our desired running time.
For each job $j$ we will only look at patterns where all machines $i$ satisfy
\begin{equation*}
  \ell_i(j) = \min_{j'>j} \left\{d_{j'} - \frac{1}{m}\sum_{j'>j} p_{j'} \right\} + k, \qquad \text{for } k \in \{-O(p_{\max}^2),\dotsc,O(p_{\max}^2)\}.
\end{equation*}
By the second part of Lemma~\ref{lem:Pm}
it is clear that we can ignore larger values of $k$, but it needs to be clarified why we can
ignore smaller values of $k$ as well. In the case that $\ell_i(j) < \min_{j'>j} \{d_{j'} - \frac{1}{m}\sum_{j'>j} p_{j'} \} - \Omega(p^2_{\max})$
for some $i$ and $j$ (with sufficiently large hidden constants), we may assume with~\eqref{eq:machines-balance} that all machines $i'$ satisfy
$\ell_i(j) \le \min_{j'>j} \{d_{j'} - \frac{1}{m}\sum_{j''=j+1}^{j'} p_{j''} \} - p_{\max}$.
It is not hard to see that starting with such a solution we can greedily add all remaining jobs $j'>j$ to the schedule
without violating any due date. This is still true if we increase the machine loads until we reach $k = -O(p_{\max}^2)$.
It is therefore not necessary to remember any lower load.

For each $j = 1,2,\dotsc,n$, the number of patterns $(\ell_1(j), \ell_2(j),\dotsc,\ell_m(j))$ can now be bounded by $p_{\max}^{O(m)}$, so there are in total $n \cdot p_{\max}^{O(m)}$ states in the dynamic program.
Calculations for each state take $O(m)$ time, because job $j$ is either scheduled as the last job on one of the $m$ machines, or not scheduled at all. Hence, we obtain an $n \cdot p_{\max}^{O(m)}$ running time.

\section{Hardness of ILP with triangular block structure}

In this section we will show that it is NP-hard to decide if there is a feasible solution to an integer linear program of the form

\[
\begin{pmatrix}
A & 0 & \cdots & 0\\
A & A & \ddots & \vdots \\
\vdots &  & \ddots & 0 \\
A & A & \cdots & A 
\end{pmatrix} \cdot \begin{pmatrix}
x_1 \\ x_2 \\ \vdots \\ x_n
\end{pmatrix} \leqslant \begin{pmatrix}
b_1 \\ b_2 \\ \vdots \\ b_m
\end{pmatrix}, \quad \forall_i \ 0 \leq x_i \leq u_i, \quad \forall_i \ x_i \in \mathbb{Z},
\]
even when 
$A$ is a matrix of constant size with integers entries of at most a constant absolute values,
$b_i \in \{0, 1, +\infty\}$ for $i = 1,2,\dotsc,m$, and 
$u_i \in \{0, +\infty\}$ for $i=1,2,\dotsc,n$.
This implies Theorem~\ref{th:trianglefold} by taking $B_1, B_2, \dotsc, B_n$ each as an identity matrix and a negated identity matrix on top of each other, which allows us to easily implement the bounds on the variables.

We will give a reduction from the Subset Sum problem, which asks, given $n$ nonnegative integers $a_1, a_2, \ldots, a_n \in \mathbb{Z}_{\geqslant 0}$, and a target value $t \in \mathbb{Z}_{\geqslant 0}$, whether there exists a subset of $\{a_1, a_2, \ldots, a_n\}$ that sums up to $t$. The Subset Sum problem is weakly NP-hard~\cite{Karp72}, so the reduction will have to to deal with $n$-bit input numbers.

Before we describe the actual reduction, let us start with two general observations that we are going to use multiple times later in the proof.

First, even though the theorem statement speaks, for clarity, about an ILP structure with only ``$\leqslant$'' constraints, we can actually have all three types of constraints, i.e., ``$\leqslant$'', ``$=$'', and ``$\geqslant$''. Indeed, it suffices to append to the matrix $A$ a negated copy of each row, in order to be able to specify for each constraint not only an upper bound but also a lower bound (and use $+\infty$ when only one bound is needed). An equality constraint can then be expressed as an upper bound and a lower bound with the same value.

Second, even though the constraint matrix has a very rigid repetitive structure, we can selectively cancel each individual row, by setting the corresponding constraint to ``$\leq +\infty$'', or each individual column -- by setting the upper bound of the corresponding variable to $0$.

For now, let us consider matrix $A$ composed of four submatrices $B$, $C$, $D$, $E$, arranged in a $2 \times 2$ grid, as follows:
\[
A = \begin{pmatrix} B & C \\ D & E \end{pmatrix}; \quad
\begin{pmatrix}
A &   &   &   & \\
A & A &   &   & \\
A & A & A &   & \\
\vdots & & &\ddots
\end{pmatrix} = \begin{pmatrix}
B & C &   &   &   &   &    \\
D & E &   &   &   &   &    \\
B & C & B & C &   &   &    \\
D & E & D & E &   &   &    \\
B & C & B & C & B & C &    \\
D & E & D & E & D & E &    \\
\vdots & & & & & & \ddots \\
\end{pmatrix}.
\]

In every odd row of $A$'s we will cancel the bottom $(D, E)$ row, and in every even row -- the upper $(B, C)$ row, and similarly for columns, so that we obtain the following structure of the constraint matrix:

\[
\begin{pmatrix}
B & \tikzmark{ta} C & \tikzmark{tb}  &   &   & \tikzmark{tc}  & \tikzmark{td}  &   & \\
\tikzmark{la} D & E &   &   &   &   &   &   & \tikzmark{ra} \\
\tikzmark{lb} B & C & B & C &   &   &   &   & \tikzmark{rb} \\
D & E & D & E &   &   &   &   & \\
B & C & B & C & B & C &   &   & \\
\tikzmark{lc}D & E & D & E & D & E &   &   & \tikzmark{rc} \\
\tikzmark{ld}B & C & B & C & B & C & B & C & \tikzmark{rd} \\
D & E & D & E & D & E & D & E & \\
\vdots & \tikzmark{ba}  & \tikzmark{bb}  &   &   & \tikzmark{bc}  & \tikzmark{bd}  &   & \ddots \\
\end{pmatrix} \cong \begin{pmatrix}
B &   &   &   & \\
D & E &   &   & \\
B & C & B &   & \\
D & E & D & E & \\
\vdots & & & & \ddots
\end{pmatrix}
\]

\begin{tikzpicture}[overlay,remember picture]
\draw ($(pic cs:la)+(0pt,5pt)$) -- ($(pic cs:ra)+(0pt,5pt)$);
\draw ($(pic cs:lb)+(0pt,5pt)$) -- ($(pic cs:rb)+(0pt,5pt)$);
\draw ($(pic cs:lc)+(0pt,5pt)$) -- ($(pic cs:rc)+(0pt,5pt)$);
\draw ($(pic cs:ld)+(0pt,5pt)$) -- ($(pic cs:rd)+(0pt,5pt)$);
\draw ($(pic cs:ta)+(4pt,11pt)$) -- ($(pic cs:ba)+(0pt,0pt)$);
\draw ($(pic cs:tb)+(0pt,11pt)$) -- ($(pic cs:bb)+(0pt,0pt)$);
\draw ($(pic cs:tc)+(0pt,11pt)$) -- ($(pic cs:bc)+(0pt,0pt)$);
\draw ($(pic cs:td)+(0pt,11pt)$) -- ($(pic cs:bd)+(0pt,0pt)$);
\end{tikzpicture}

Now, let us set the four submatrices of $A$ to be $1 \times 1$ matrices with the following values.
\[
B = \begin{pmatrix} 1 \end{pmatrix}, \quad
C = \begin{pmatrix} -2 \end{pmatrix}, \quad
D = \begin{pmatrix} 1 \end{pmatrix}, \quad
E = \begin{pmatrix} -1 \end{pmatrix}.
\]

Let us consider a constraint matrix composed of $2n$ block-rows and $2n$ block-columns.
We set the first constraint to ``$\leq 1$'', and all the remaining constraints to ``$=0$''. Let us denote the variables corresponding to the first column of $A$ by $y_1, y_2, 
\ldots, y_n$, and those to the second column by $z_1, z_2, \ldots, z_n$. We have the following ILP:

\[
\begin{pmatrix}
1 &    &   &    &   &    & \\
1 & -1 &   &    &   &    & \\
1 & -2 & 1 &    &   &    & \\
1 & -1 & 1 & -1 &   &    & \\
1 & -2 & 1 & -2 & 1 &    & \\
1 & -1 & 1 & -1 & 1 & -1 & \\
\vdots & & & & & & \ddots 
\end{pmatrix} \cdot \begin{pmatrix}
y_1 \\ z_1 \\ y_2 \\ z_2 \\ y_3 \\ z_3 \\ \vdots
\end{pmatrix}\quad\begin{matrix}
\leqslant \\ = \\ = \\ = \\ = \\ = \\ \vdots
\end{matrix}\begin{pmatrix}
1 \\ 0 \\ 0 \\ 0 \\ 0 \\ 0 \\ \vdots
\end{pmatrix}
\]

Observe that $z_i = y_i$ and $y_{i+1} = y_1 + \cdots + y_i$ for every $i$.
Since $y_1 \in \{0, 1\}$, it is easy to verify that there are exactly two solutions to this ILP.
Indeed, either $y_i = z_i = 0$ for every $i$, or $y_1 = z_1 = 1$ and $y_{i+1}=z_{i+1} = 2^i$ for every $i$. In other words, either $z = (0,0,0,\ldots)$, or $z = (1, 1, 2, 4, 8, \ldots)$. We will call these two solutions \emph{all-zeros} and \emph{powers-of-two}, respectively.

Now, let us add one more column, namely $(1, 0)$, to matrix $A$, which therefore looks now as follows:
\[ A = \begin{pmatrix} 1 & -2 & 1 \\ 1 & -1 & 0 \end{pmatrix}. \]
The newly added column (and the corresponding variable) shall be cancelled (by setting the corresponding upper bound to $0$) in all but the last copy of $A$, which in turn shall have the other two columns cancelled. Let us call $w$ the variable corresponding to the only non-cancelled copy of the $(1, 0)$ column. Both solutions to the previous ILP extend to the current one, with $w = \sum_i z_i$. Note that, in both solutions, both $\sum_i (y_i - 2 z_i) + w = 0$, and $\sum_i (y_i - z_i) = 0$. Therefore, if we append another copy of the ILP to itself, as follows,
\[
\begin{pmatrix}
\tikzmark{startofmatrix}
1 &    &   &    &   &    & & & & \\
1 & -1 &   &    &   &    & & & & \\
1 & -2 & 1 &    &   &    & & & &\\
1 & -1 & 1 & -1 &   &    & & & &\\
\vdots & & & & \ddots & & & & & \\
1 & -2 & 1 & -2 & \cdots & 1 & & &\\
1 & -1 & 1 & -1 & \cdots & 0 \tikzmark{endoffirst} & & &\\
1 & -2 & 1 & -2 & \cdots & 1 & \tikzmark{startofsecond} 1 & &\\
1 & -1 & 1 & -1 & \cdots & 0 & 1 & -1 &\\
\vdots & & & & & & & & \ddots
\tikzmark{endofmatrix}
\end{pmatrix} \cdot \begin{pmatrix}
\tikzmark{startofvars}
y_1 \\ z_1 \\ y_2 \\ z_2 \\ \vdots \\ w \tikzmark{endoffirstvars} \\ \tikzmark{startofsecondvars} y'_1 \\ z'_1 \\ \vdots
\tikzmark{endofvars}
\end{pmatrix}\quad\begin{matrix}
\tikzmark{startofcons}
\leqslant \\ = \\ = \\ = \\ \vdots \\ = \\ =  \\ \tikzmark{startofsecondcons} \leqslant \\ = \\ \vdots
\end{matrix}\begin{pmatrix}
1 \\ 0 \\ 0 \\ 0 \\ \vdots \\ 0 \\ 0 \tikzmark{endoffirstcons} \\ 1 \\ 0 \\ \vdots
\tikzmark{endofcons}
\end{pmatrix},
\]
the two copies are independent from each other, and we get an ILP that has exactly four feasible solutions: both $z$ and $z'$ can be either all-zeros or powers-of-two, independently, giving four choices in total.
\begin{tikzpicture}[overlay,remember picture]
\draw[dashed] ($(pic cs:startofmatrix)+(-2pt,.9em)$) rectangle ($(pic cs:endoffirst)+(2pt,-2pt)$);
\draw[dashed] ($(pic cs:startofsecond)+(-2pt,.9em)$) rectangle ($(pic cs:endofmatrix)+(2pt,-2pt)$);

\draw[dashed] ($(pic cs:startofvars)+(-10pt,.85em)$) rectangle ($(pic cs:endoffirstvars)+(10pt,-1pt)$);
\draw[dashed] ($(pic cs:startofsecondvars)+(-10pt,.85em)$) rectangle ($(pic cs:endofvars)+(13pt,-1pt)$);

\draw[dashed] ($(pic cs:startofcons)+(-3pt,.85em)$) rectangle ($(pic cs:endoffirstcons)+(10pt,-1pt)$);
\draw[dashed] ($(pic cs:startofsecondcons)+(-3pt,.85em)$) rectangle ($(pic cs:endofcons)+(13pt,-1pt)$);
\end{tikzpicture}

Let $n$ be the number of elements in the Subset Sum instance we reduce from. We copy the above construction $n+1$ times, and we will call each copy a \emph{super-block}. In the last super-block we change the first constraint from ``$\leqslant 1$'' to ``$=1$'', effectively forcing the powers-of-two solution. Therefore, the resulting ILP has exactly $2^n$ feasible solutions -- two choices for each of the first $n$ super-blocks, one choice for the last super-block. We will denote by $z_{i,j}$ the $j$-th $z$-variable in the $i$-th super-block.

Now, we replace the $z$-column of $A$ with three identical copies of it, and each variable $z_{i,j}$ with three variables $p_{i,j}$, $q_{i,j}$, $r_{i,j}$. 

For each $i$, $j$ we will set to $0$ exactly two out of the three upper bounds of $p_{i,j}$, $q_{i,j}$, $r_{i,j}$. Therefore, the solutions of the ILP after the replacement map one-to-one to the solutions of the ILP before the replacement, with $z_{i,j} = p_{i,j} + q_{i,j} + r_{i,j}$. Let $a_1, a_2, \ldots, a_n \in \mathbb{Z}_{\geqslant 0}$ be the elements in the Subset Sum instance we reduce from, and let $t \in \mathbb{Z}_{\geqslant 0}$ be the target value. The upper bounds are set as follows. For every $i$ and for $j=1$, we set $p_{i,1} \leqslant +\infty$ and $q_{i,1}, r_{i,1} \leqslant 0$. For $i = 1, 2, \ldots, n$, we set
\begin{align*}
&p_{i,j} \leqslant + \infty &&\text{and} &&q_{i,j} \leqslant 0 &&\text{if the $(j-1)$-th bit of $a_i$ is zero, and}\\
&p_{i,j} \leqslant 0 &&\text{and} &&q_{i,j} \leqslant +\infty &&\text{if the $(j-1)$-th bit of $a_i$ is one};\end{align*}
in both cases $r_{i,j} \leqslant 0$. For $i = n + 1$ we look at $t$ instead of $a_i$, and we swap the roles of the $q$-variables and $r$-variables, i.e., we set
\begin{align*}
&p_{n+1,j} \leqslant + \infty &&\text{and} &&r_{n+1,j} \leqslant 0 &&\text{if the $(j-1)$-th bit of $t$ is zero, and}\\
&p_{n+1,j} \leqslant 0 &&\text{and} &&r_{n+1,j} \leqslant +\infty &&\text{if the $(j-1)$-th bit of $t$ is one};\end{align*}
and in both cases $q_{n+1,j} \leqslant 0$.

Note that, for $i = 1, 2, \ldots, n$, depending on whether the part of the solution corresponding to the $i$-th super-block is the all-zeros or powers-of-two, $\sum_j q_{i,j}$ equals either $0$ or $a_i$. Hence, the set of the sums of all $q$-variables over all feasible solutions to the ILP is exactly the set of Subset Sums of $\{a_1, a_2, \ldots, a_n\}$. Moreover, $\sum_j r_{n+1,j} = t$.

We need one last step to finish the description of the ILP. We add to matrix A row $(0, 0, 1, -1, 0)$, so it looks as follows:
\[ A = \begin{pmatrix} 1 & -2 & -2 & -2 & 1 \\ 1 & -1 & -1 & -1 & 0 \\ 0 & 0 & 1 & -1 & 0 \end{pmatrix}. \]

The newly added row (and the corresponding constraint) shall be cancelled (by setting the constraint to ``$\leqslant +\infty$'') in all but the last row of the whole constraint matrix. That last constraint in turn shall be set to ``$= 0$'', so that we will have $\sum_i \sum_j q_{i,j} - \sum_i \sum_j r_{i,j} = 0$, i.e., $\sum_i \sum_j q_{i,j} = t$. Hence, the final constructed ILP has a feasible solution if and only if there is a choice of all-zeros and powers-of-two solutions for each super-block that corresponds to a choice of a subset of $\{a_1, a_2, \ldots, a_n\}$ that sums up to $t$. In other words, the ILP has a feasible solution if and only if the Subset Sum instance has a feasible solution.

Finally, let us note that the ILP has $O(n^2)$ variables, and the desired structure.

\bibliographystyle{abbrv}
\bibliography{library}

\begin{thebibliography}{10}

\bibitem{bringmann2017subset_sum_near_linear}
K.~Bringmann.
\newblock A near-linear pseudopolynomial time algorithm for subset sum.
\newblock In {\em Proceedings of {SODA}}, pages 1073--1084, 2017.

\bibitem{bringmann_deadline_scheduling}
K.~Bringmann, N.~Fischer, D.~Hermelin, D.~Shabtay, and P.~Wellnitz.
\newblock Faster minimization of tardy processing time on a single machine.
\newblock {\em Algorithmica}, 84(5):1341--1356, 2022.

\bibitem{BringmannKW19}
K.~Bringmann, M.~K{\"{u}}nnemann, and K.~Wegrzycki.
\newblock Approximating {APSP} without scaling: equivalence of approximate
  min-plus and exact min-max.
\newblock In {\em Proceedings of {STOC}}, pages 943--954. {ACM}, 2019.

\bibitem{DBLP:conf/soda/CslovjecsekEHRW21}
J.~Cslovjecsek, F.~Eisenbrand, C.~Hunkenschr{\"{o}}der, L.~Rohwedder, and
  R.~Weismantel.
\newblock Block-structured integer and linear programming in strongly
  polynomial and near linear time.
\newblock In {\em Proceedings of {SODA}}, pages 1666--1681, 2021.

\bibitem{CslovjecsekEPVW21}
J.~Cslovjecsek, F.~Eisenbrand, M.~Pilipczuk, M.~Venzin, and R.~Weismantel.
\newblock Efficient sequential and parallel algorithms for multistage
  stochastic integer programming using proximity.
\newblock In {\em Proceedings of {ESA}}, pages 33:1--33:14, 2021.

\bibitem{cygan2019_minconv}
M.~Cygan, M.~Mucha, K.~Wegrzycki, and M.~Wlodarczyk.
\newblock On problems equivalent to (min, +)-convolution.
\newblock {\em {ACM} Trans. Algorithms}, 15(1):14:1--14:25, 2019.

\bibitem{EisenbrandHK18}
F.~Eisenbrand, C.~Hunkenschr{\"{o}}der, and K.~Klein.
\newblock Faster algorithms for integer programs with block structure.
\newblock In {\em Proceedings of {ICALP}}, volume 107, pages 49:1--49:13, 2018.

\bibitem{eisenbrand2018proximity}
F.~Eisenbrand and R.~Weismantel.
\newblock Proximity results and faster algorithms for integer programming using
  the {S}teinitz lemma.
\newblock In {\em Proceedings of {SODA}}, pages 808--816, 2018.

\bibitem{hermelin_deadline_scheduling}
D.~Hermelin, H.~Molter, and D.~Shabtay.
\newblock Single machine weighted number of tardy jobs minimization with small
  weights.
\newblock {\em CoRR}, abs/2202.06841, 2022.

\bibitem{JansenKL21}
K.~Jansen, K.~Klein, and A.~Lassota.
\newblock The double exponential runtime is tight for 2-stage stochastic ilps.
\newblock In {\em Proceedings of {IPCO}}, pages 297--310, 2021.

\bibitem{Karp72}
R.~M. Karp.
\newblock Reducibility among combinatorial problems.
\newblock In {\em Proceedings of a symposium on the Complexity of Computer
  Computations}, pages 85--103, 1972.

\bibitem{klein_multistage}
K.~Klein and J.~Reuter.
\newblock Collapsing the tower -- on the complexity of multistage stochastic
  {IPs}.
\newblock In {\em Proceedings of {SODA}}, pages 348--358, 2022.

\bibitem{knop2017scheduling}
D.~Knop and M.~Kouteck{\'{y}}.
\newblock Scheduling meets n-fold integer programming.
\newblock {\em Journal of Scheduling}, 21(5):493--503, 2018.

\bibitem{maxminconv}
S.~R. Kosaraju.
\newblock Efficient tree pattern matching.
\newblock In {\em Proceedings of {FOCS}}, pages 178--183, 1989.

\bibitem{KouteckyLO18}
M.~Kouteck{\'{y}}, A.~Levin, and S.~Onn.
\newblock A parameterized strongly polynomial algorithm for block structured
  integer programs.
\newblock In {\em Proceedings of {ICALP}}, pages 85:1--85:14, 2018.

\bibitem{KunnemannPS17_finegrainedDP}
M.~K{\"{u}}nnemann, R.~Paturi, and S.~Schneider.
\newblock On the fine-grained complexity of one-dimensional dynamic
  programming.
\newblock In {\em Proceedings of {ICALP}}, pages 21:1--21:15, 2017.

\bibitem{lawler1969functional_deadline}
E.~L. Lawler and J.~M. Moore.
\newblock A functional equation and its application to resource allocation and
  sequencing problems.
\newblock {\em Management science}, 16(1):77--84, 1969.

\bibitem{Lincoln0W20}
A.~Lincoln, A.~Polak, and V.~V. Williams.
\newblock Monochromatic triangles, intermediate matrix products, and
  convolutions.
\newblock In {\em Proceedings of {ITCS}}, pages 53:1--53:18, 2020.

\bibitem{polak_subsetsum_an}
A.~Polak, L.~Rohwedder, and K.~Wegrzycki.
\newblock Knapsack and subset sum with small items.
\newblock In {\em Proceedings of {ICALP}}, pages 106:1--106:19, 2021.

\end{thebibliography}

\end{document}